\newcommand{\cmark}{\ding{51}}
\newcommand{\xmark}{\ding{55}}
\newcommand{\lightText}[1]{\emph{\color{gray}#1}}
\begin{document}
\title{Canonical Representations for Direct Generation 
	of Strategies in High-level Petri~Games (Full~Version)%
	\footnote[3]{This is an extended version of \cite{GiesekingW21}.}%
	\thanks{
	This work was supported by 
	the German Research Foundation (DFG)
	through the
	Research Training Group 
	(DFG GRK 1765) SCARE 
	and through Grant Petri Games (No. 392735815).
	}
}
\titlerunning{Canonical Representations for Solving High-Level Petri Games}
\author{Manuel Gieseking\and Nick W\"urdemann}
\authorrunning{M. Gieseking \and N. W\"urdemann}
\institute{Department of Computing Science,
	University of Oldenburg,
	Oldenburg, Germany\\
\email{\{gieseking,wuerdemann\}@informatik.uni-oldenburg.de}
}
\maketitle
\begin{abstract}
	Petri games are a multi-player game model
	for the synthesis problem in distributed systems, i.e.,
	the automatic generation of local controllers.
	The model represents 
	causal memory of the players, which are 
	tokens on a Petri net and
	divided into two teams: the controllable system and the uncontrollable environment.
	For one environment player and 
	a bounded number of system players,
	the problem of solving Petri games can be reduced 
	to that of solving B\"uchi games. 
	
	High-level Petri games are  a concise 
	representation of ordinary Petri games.
	Symmetries, derived from a high-level representation,
	can be exploited to significantly
	reduce the state space in the corresponding B\"uchi game.
	We present a new construction for solving high-level Petri games. It involves
	the definition of a unique, canonical representation of the reduced B\"uchi game.
	This allows us to translate a strategy 
	in the B\"uchi game directly into a strategy in the Petri game.
	An implementation applied on six structurally different benchmark families
	shows in most cases a performance increase for larger state spaces.
\end{abstract}
\section{Introduction}\label{sec:Introduction}
Whether telecommunication networks,
electronic banking, or the world wide web,
\emph{distributed systems} 
are all around us and are becoming
increasingly more widespread.
Though an entire system may appear as one unit,
the local controllers in a network often act autonomously on only incomplete information
to avoid constant communication.
These independent agents must behave correctly under all 
possible uncontrollable behavior of the environment.
\emph{Synthesis}~\cite{Church/57/Applications}
avoids the error-prone task of manually implementing such local controllers
by automatically generating correct ones 
from a given specification 
(or stating the nonexistence of such controllers).
In case of a single process in the underlying model,
synthesis approaches have been successfully applied in nontrivial applications
(e.g., \cite{DBLP:conf/date/BloemGJPPW07}, 
\cite{DBLP:journals/trob/Kress-GazitFP09}). 
Due to the incomplete information in systems with
multiple processes progressing on their individual rate,
modeling \emph{asynchronous distributed systems}
is even more cumbersome and particularly benefits from a synthesis approach.

\emph{Petri games} \cite{DBLP:journals/corr/FinkbeinerO14}
(based on an underlying Petri net \cite{DBLP:books/sp/Reisig85a} where the tokens are the players in the game)
are a well-suited multi-player game model
for the \emph{synthesis of asynchronous distributed systems}
because of its subclasses with comparably low complexity results.
For Petri games with a single \emph{environment} (\emph{uncontrollable}) player,
a bounded number of \emph{system} (\emph{controllable}) players,
and a \emph{safety objective}, i.e.,
all players have to avoid designated \emph{bad places},
deciding the existence of a winning strategy for the system players is \textsc{exptime}-complete~\cite{DBLP:journals/iandc/FinkbeinerO17}.
This problem is called the \emph{realizability problem}.
The result is obtained via a reduction to a two-player B\"uchi game 
with enriched markings, so called \emph{decision sets}, as states.

\emph{High-level Petri nets} \cite{DBLP:series/eatcs/Jensen92} 
can concisely model large distributed systems.
Correspondingly,
\emph{high-level Petri games} \cite{DBLP:journals/corr/abs-1904-05621}
are a concise high-level representation of ordinary Petri games.
For solving high-level Petri games,
the \emph{symmetries} \cite{10.5555/115220.115224} of the system 
can be exploited to build a \emph{symbolic B\"uchi game}
with a significantly smaller number of states \cite{DBLP:journals/acta/GiesekingOW20}.
The states are \emph{equivalence classes} of decision sets and called \emph{symbolic decision sets}.
For generating a Petri game strategy for a high-level Petri game
the approach proposed in \cite{DBLP:journals/acta/GiesekingOW20}
resorts to the original strategy construction in \cite{DBLP:journals/iandc/FinkbeinerO17},
i.e., the equivalence classes of a symbolic two-player strategy are dissolved
and a strategy for the standard two-player game is generated.
Figure~\ref{fig:HLandLLDiagram} shows the relation of the elements just described.

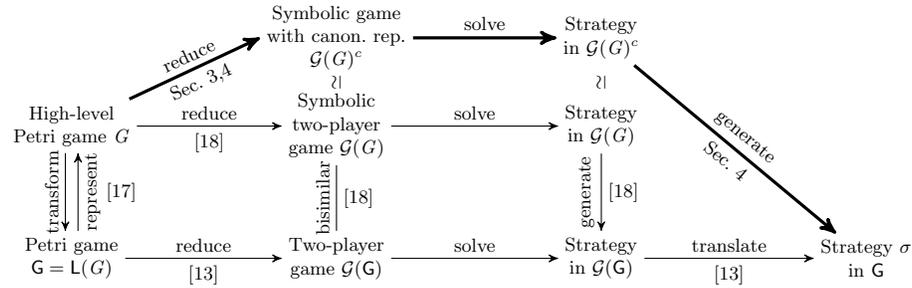
\begin{figure}[!t]
	\centering	
	\begin{tikzpicture}[transform shape,node distance=\ydis and \xdis, on grid, scale=0.8]
	\renewcommand{\xdis}{4.4cm}
	\renewcommand{\ydis}{2.2cm}
	\node[align=center](HLPG){High-level\\ Petri game $ \SG $};
	\node[below=of HLPG, align=center] (LLPG) {Petri game\\ $ \PG=\HLtoLL(\SG) $};
	\node[right=of HLPG, align=center](HL2PG){Symbolic\\two-player\\game $ \STPG $};
	\node[below=of HL2PG, align=center] (LL2PG) {Two-player\\game $ \TPG $};
	\node[right=of HL2PG, align=center](HL2PGS){ Strategy\\ in $ \STPG $ };
	\node[below=of HL2PGS, align=center] (LL2PGS) {Strategy\\in $ \TPG $};
	\node[right=of HL2PGS](HLPGS){};
	\node[below=of HLPGS, align=center] (LLPGS) {Strategy $ \PGS $\\ in $ \PG $};
	\draw[->]
	([xshift=-3pt]HLPG.south) edge node [above, rotate=90] {transform} 
	node [right,xshift=16pt] {\cite{DBLP:journals/corr/abs-1904-05621}} ([xshift=-3pt]LLPG.north)
	(LLPG) edge node [above] {reduce} node [below] {\cite{DBLP:journals/iandc/FinkbeinerO17}} (LL2PG)
	(LL2PG) edge node [above] {solve} node[below]{}(LL2PGS)
	([xshift=3pt]LLPG.north) -- node [below, rotate=90, align=center] {represent} ([xshift=3pt]HLPG.south);
	\draw 
	(HL2PG)-- node[above, rotate=90] {bisimilar} node [right] {\cite{DBLP:journals/acta/GiesekingOW20}} (LL2PG);
	\draw[->]
	(HLPG) edge node [above] {reduce} node [below] {\cite{DBLP:journals/acta/GiesekingOW20}} (HL2PG)
	(HL2PG) edge node [above] {solve}node[below]{} (HL2PGS)
	(HL2PGS) edge node [above, rotate=90] {generate} node [right] {\cite{DBLP:journals/acta/GiesekingOW20}} (LL2PGS)
	(LL2PGS) edge node [above] {translate} node [below] {\cite{DBLP:journals/iandc/FinkbeinerO17}} (LLPGS)
	;
	
	\node[above=of HL2PG,align=center,yshift=-1/3*\ydis](HLC){Symbolic game\\ with canon. rep.\\  $ \STPGc $};
	\node[right=of HLC,align=center](HLCS){Strategy\\ in $ \STPGc $};
	
	\draw[->,very thick]
	(HLPG) edge node [above,sloped] {reduce} node [below,sloped] {\begin{NoHyper}\refSection{CanonRepsOfSymbDecsets},\ref{sec:DirStratGen}\end{NoHyper}} (HLC.west)
	(HLC) edge node [above] {solve}node[below]{} (HLCS)
	(HLCS) edge node [above,sloped] {generate} node[below,sloped]{\begin{NoHyper}\refSection{DirStratGen}\end{NoHyper}} (LLPGS)
	;
	\node at ($ (HLPG)!0.5!(HLC.west) $) [xshift=7.1mm,yshift=0.8mm] {\hyperlink{target:CanonRepsOfSymbDecsets}{\phantom{$ 3 $}}};
	\node at ($ (HLPG)!0.5!(HLC.west) $) [xshift=10mm,yshift=1.9mm] {\hyperlink{target:DirStratGen}{\phantom{$ 4 $}}};
	\node at ($ (HLCS)!0.5!(LLPGS) $) [xshift=0.8mm,yshift=-4.8mm] {\hyperlink{target:DirStratGen}{\phantom{$ A $}}};
	
	\draw[draw opacity=0]
	(HLC) edge node [rotate=90] {$ \simeq $} (HL2PG)
	(HLCS) edge node [rotate=90] {$ \simeq $} (HL2PGS)
	;
	\end{tikzpicture}
	\caption{
		An overview of the scope of this paper.
		The connections between the different elements 
		describe their interplay and where these
		methods are introduced. 
		The connections labeled with ``solve''
		mean that a two-player (B\"uchi) game can be solved 
		by standard algorithms in game theory (e.g., \cite{DBLP:conf/dagstuhl/2001automata}).
		The bottom level corresponds to 
		the original reduction in \cite{DBLP:journals/iandc/FinkbeinerO17},
		the level above corresponds to the high-level counterparts described in 
		\cite{DBLP:journals/corr/abs-1904-05621,DBLP:journals/acta/GiesekingOW20},
		and the top level contains the elements introduced in this~paper.
		The new reduction is marked by thick edges.
	}%
	\label{fig:HLandLLDiagram}%
\vspace{-5mm}
\end{figure}
In this paper, we propose a new construction for solving high-level Petri games 
to avoid this detour while generating the strategy.
In~\cite{DBLP:journals/acta/GiesekingOW20} the symbolic B\"uchi game is generated by
comparing each newly added state with all already added ones for equivalence, i.e.,
the \emph{orbit problem} 
must be answered.
The new approach calculates a \emph{canonical representation} for each newly added state 
(the \emph{constructive orbit problem} \cite{DBLP:conf/cav/ClarkeEJS98}),
and only stores these representations.
This generation of a symbolic B\"uchi game 
with canonical representations is based on the corresponding ideas for reachability graphs
from \cite{DBLP:journals/tcs/ChiolaDFH97}.
As in \cite{DBLP:journals/acta/GiesekingOW20},
we consider safe Petri games with a high-level representation,
and exclude Petri games where the system can proceed infinitely without the environment.
For the decidability result we consider, as in \cite{DBLP:journals/iandc/FinkbeinerO17}, 
Petri games with only one environment player, i.e., in every reachable marking there is at most one token on an environment place.

One of the main advantages of the new approach is that the canonical representations
allow to directly generate a Petri game strategy
from a symbolic B\"uchi game strategy
without explicitly resolving all symmetries (cp. thick edges in \refFig{HLandLLDiagram}).
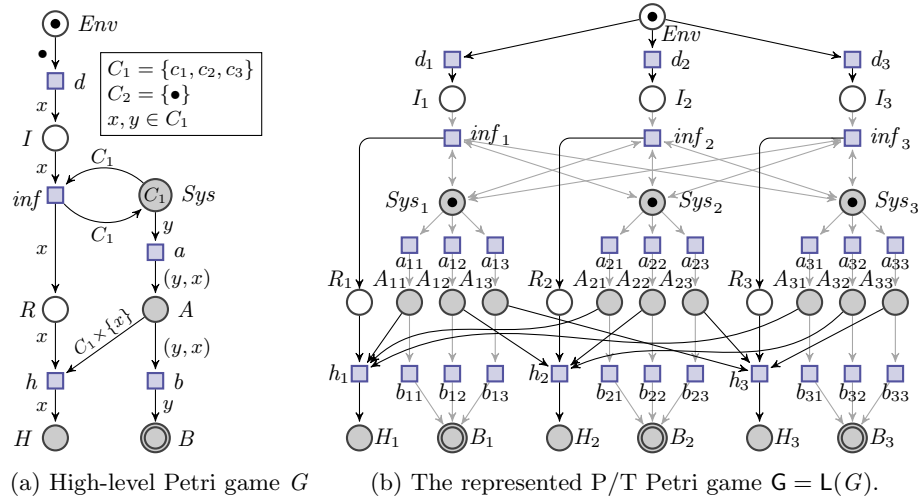
\begin{figure}[!t]
	\centering
	\begin{subfigure}[t]{0.34\textwidth}
		\begin{tikzpicture}[node distance=\ydis and \xdis,on grid,>=stealth',bend angle=45,scale=0.95, transform shape]
	\renewcommand{\xdis}{14mm}
	\renewcommand{\ydis}{8mm}
	
	\node[envplace]		(Env)	[label=right:{$ \mathit{Env} $},tokens=1]{};
	\node[transition]	(i)		[below=of Env, label={right:$ \mathit{d} $}]{};
	\node[envplace]		(I)		[below=of i, label={left:$ \mathit{I} $}]{};
	\node[transition]	(inf)	[below=of I, label={[label distance=-2pt,yshift=-2pt]left:$ \mathit{inf} $}]{};
	\node[envplace]		(R)		[below=of inf,yshift=-\ydis, label={[label distance=-1pt]left:$ \mathit{R} $}]{};
	\node[transition]	(h)		[below=of R, yshift=-2mm, label={left:$ \mathit{h} $}]{};
	\node[sysplace]		(H)		[below=of h, label={left:$ \mathit{H} $}]{};
	\node[sysplace]		(Sys)	[right=of inf,minimum size=4.5mm, label={right:$ \mathit{Sys} $}]{};
	\node[transition]	(a)		[below=of Sys, label={right:$ \mathit{a} $}]{};
	\node[sysplace]		(A)		[below=of a, label={right:$ \mathit{A} $}]{};
	\node[transition]	(b)		[below=of A, yshift=-2mm, label={right:$ \mathit{b} $}]{};
	\node[sysplace,bad]	(B)		[below=of b, label={right:$ \mathit{B} $}]{};
	\node[scale=0.85] at (Sys) 		(SysNotation){$ \basecl_1 $};

	\path[->]
	(i) 
	edge[pre] node[left,scale=0.9]{$ \bullet $}	(Env)
	edge[post] node[left,scale=0.9]{$ x $}	(I)
	(inf)
	edge[pre] node[left,scale=0.9]{$ x $}	(I)
	edge[pre,bend left]	node[above,inner sep=1pt,scale=0.9]{$ \basecl_1 $}	(Sys)
	edge[post] node[left,scale=0.9]{$ x $}	(R)
	edge[post,bend right] node[below,inner sep=2pt,scale=0.9]{$ \basecl_1 $}	(Sys)
	(a)
	edge[pre] node[right,scale=0.9]{$ y $} (Sys)
	edge[post] node[right,scale=0.9]{$ (y,x) $} (A)
	(h)
	edge[pre] node[left,pos=0.7,scale=0.9]{$ x $} (R)
	edge[pre] node[above,sloped,inner sep=1pt,scale=0.8,pos=0.57]{$ \basecl_1\!\times\!\{x\} $} (A)
	edge[post] node[left,scale=0.9]{$ x $} (H)
	(b)
	edge[pre] node[right,scale=0.9]{$ (y,x) $} (A)
	edge[post] node[right,scale=0.9]{$ y $}(B)
	;
	
	\node at (i) (description) [xshift=17.6mm, yshift=-0.23*\ydis, align=left,draw=black,scale=0.9]
	{ $ \basecl_1=\{ c_1,c_2,c_3 \} $\\
		$ \basecl_2=\{ \bullet \} $\\
		$ x,y\in\basecl_1 $};		
\end{tikzpicture}%
		\subcaption{High-level Petri game~$ \SG $}%
		\label{fig:symmetricGameExampleHLPG}
	\end{subfigure}%
	~
	\begin{subfigure}[t]{0.65\textwidth}
		\begin{tikzpicture}[node distance=\ydis and \xdis,on grid,>=stealth',bend angle=45,scale=0.95, transform shape]
	\renewcommand{\xdis}{28mm}
	\renewcommand{\ydis}{6mm}
	
	\node[envplace]		(Env)	[label={[inner sep=0pt,xshift=4mm,yshift=0.7mm]below:$ \mathit{Env} $},tokens=1]{};
	\node[transition]	(i2)		[below=of Env, label={right:$ \mathit{d}_2 $}]{};
	\node[envplace]		(I2)		[below=of i2,yshift=0.5mm, label={right:$ \mathit{I}_2 $}]{};	
	\node[transition]	(inf2)		[below=of I2,yshift=0.5mm, label={[xshift=0.5mm]right:$ \mathit{inf}_2 $}]{};
	\node[sysplace]		(Sys2)		[below=of inf2, yshift=-3mm, label={[xshift=0.5mm]right:$ \mathit{Sys}_2 $},tokens=1]{};	
	\node[transition]	(a22)		[below=of Sys2, label={[inner sep=1pt]below:$ \mathit{a}_{22} $}]{};	
	\node[transition] at (a22)	(a21)		[xshift=-6mm, label={[inner sep=1pt]below:$ \mathit{a}_{21} $}]{};
	\node[transition] at (a22)	(a23)		[xshift=6mm, label={[inner sep=1pt]below:$ \mathit{a}_{23} $}]{};
	\node[sysplace]		(A22)		[below=of a22, yshift=-2mm, label={[inner sep=0pt,xshift=-2.7mm]above:$ \mathit{A}_{22} $}]{};
	\node[sysplace] at (A22)	(A21)		[xshift=-6mm, label={[inner sep=0pt,xshift=-2.7mm]above:$ \mathit{A}_{21} $}]{};	
	\node[sysplace] at (A22)	(A23)		[xshift=6mm, label={[inner sep=0pt,xshift=-2.7mm]above:$ \mathit{A}_{23} $}]{};
	\node[envplace]	at (A21)	(R2)		[xshift=-7mm, label={[inner sep=0pt,xshift=-2.7mm]above:$ \mathit{R}_2 $}]{};
	\node[transition]	(h2)		[below=of R2, yshift=-4mm, label={[inner sep=0pt]left:$ \mathit{h}_2 $}]{};
	\node[sysplace]		(H2)		[below=of h2, yshift=-3mm, label={[inner sep=0pt]right:$ \mathit{H}_{2} $}]{};
	
	\node[transition]	(b22)		[below=of A22, yshift=-4mm, label={[inner sep=0pt]below:$ \mathit{b}_{22} $}]{};	
	\node[transition] at (b22)	(b21)		[xshift=-6mm, label={[inner sep=0pt]below:$ \mathit{b}_{21} $}]{};
	\node[transition] at (b22)	(b23)		[xshift=6mm, label={[inner sep=0pt]below:$ \mathit{b}_{23} $}]{};
	\node[sysplace,bad]		(B22)		[below=of b22, yshift=-3mm, label={[inner sep=1pt]right:$ \mathit{B}_{2} $}]{};
		
	\node[transition]	(i1)		[left=of i2, label={left:$ \mathit{d}_1 $}]{};
	\node[envplace]		(I1)		[below=of i1,yshift=0.5mm, label={left:$ \mathit{I}_1 $}]{};	
	\node[transition]	(inf1)		[below=of I1,yshift=0.5mm,  label={[yshift=0.5mm]right:$ \mathit{inf}_1 $}]{};
	\node[sysplace]		(Sys1)		[below=of inf1, yshift=-3mm, label={left:$ \mathit{Sys}_1 $},tokens=1]{};	
	\node[transition]	(a12)		[below=of Sys1, label={[inner sep=1pt]below:$ \mathit{a}_{12} $}]{};	
	\node[transition] at (a12)	(a11)		[xshift=-6mm, label={[inner sep=1pt]below:$ \mathit{a}_{11} $}]{};
	\node[transition] at (a12)	(a13)		[xshift=6mm, label={[inner sep=1pt]below:$ \mathit{a}_{13} $}]{};
	\node[sysplace]		(A12)		[below=of a12, yshift=-2mm, label={[inner sep=0pt,xshift=-2.7mm]above:$ \mathit{A}_{12} $}]{};
	\node[sysplace] at (A12)	(A11)		[xshift=-6mm, label={[inner sep=0pt,xshift=-2.7mm]above:$ \mathit{A}_{11} $}]{};	
	\node[sysplace] at (A12)	(A13)		[xshift=6mm, label={[inner sep=0pt,xshift=-2.7mm]above:$ \mathit{A}_{13} $}]{};	
	\node[envplace]	at (A11)	(R1)		[xshift=-7mm, label={[inner sep=0pt,xshift=-2.7mm]above:$ \mathit{R}_1 $}]{};
	\node[transition]	(h1)		[below=of R1, yshift=-4mm, label={[inner sep=0pt]left:$ \mathit{h}_1 $}]{};
	\node[sysplace]		(H1)		[below=of h1, yshift=-3mm, label={[inner sep=0pt]right:$ \mathit{H}_{1} $}]{};
	
	\node[transition]	(b12)		[below=of A12, yshift=-4mm, label={[inner sep=0pt]below:$ \mathit{b}_{12} $}]{};	
	\node[transition] at (b12)	(b11)		[xshift=-6mm, label={[inner sep=0pt]below:$ \mathit{b}_{11} $}]{};
	\node[transition] at (b12)	(b13)		[xshift=6mm, label={[inner sep=0pt]below:$ \mathit{b}_{13} $}]{};
	\node[sysplace,bad]		(B12)		[below=of b12, yshift=-3mm, label={[inner sep=1pt]right:$ \mathit{B}_{1} $}]{};
	
	\node[transition]	(i3)		[right=of i2, label={right:$ \mathit{d}_3 $}]{};
	\node[envplace]		(I3)		[below=of i3,yshift=0.5mm, label={right:$ \mathit{I}_3 $}]{};	
	\node[transition]	(inf3)		[below=of I3,yshift=0.5mm,  label={right:$ \mathit{inf}_3 $}]{};
	\node[sysplace]		(Sys3)		[below=of inf3, yshift=-3mm, label={right:$ \mathit{Sys}_3 $},tokens=1]{};	
	\node[transition]	(a32)		[below=of Sys3, label={[inner sep=1pt]below:$ \mathit{a}_{32} $}]{};	
	\node[transition] at (a32)	(a31)		[xshift=-6mm, label={[inner sep=1pt]below:$ \mathit{a}_{31} $}]{};
	\node[transition] at (a32)	(a33)		[xshift=6mm, label={[inner sep=1pt]below:$ \mathit{a}_{33} $}]{};
	\node[sysplace]		(A32)		[below=of a32, yshift=-2mm, label={[inner sep=0pt,xshift=-2.7mm]above:$ \mathit{A}_{32} $}]{};
	\node[sysplace] at (A32)	(A31)		[xshift=-6mm, label={[inner sep=0pt,xshift=-2.7mm]above:$ \mathit{A}_{31} $}]{};	
	\node[sysplace] at (A32)	(A33)		[xshift=6mm, label={[inner sep=0pt,xshift=-2.7mm]above:$ \mathit{A}_{33} $}]{};
	\node[envplace]	at (A31)	(R3)		[xshift=-7mm, label={[inner sep=0pt,xshift=-2.7mm]above:$ \mathit{R}_3 $}]{};
	\node[transition]	(h3)		[below=of R3, yshift=-4mm, label={[inner sep=0pt,yshift=-1mm]left:$ \mathit{h}_3 $}]{};
	\node[sysplace]		(H3)		[below=of h3, yshift=-3mm, label={[inner sep=0pt]right:$ \mathit{H}_{3} $}]{};
	
	\node[transition]	(b32)		[below=of A32, yshift=-4mm, label={[inner sep=0pt]below:$ \mathit{b}_{32} $}]{};	
	\node[transition] at (b32)	(b31)		[xshift=-6mm, label={[inner sep=0pt]below:$ \mathit{b}_{31} $}]{};
	\node[transition] at (b32)	(b33)		[xshift=6mm, label={[inner sep=0pt]below:$ \mathit{b}_{33} $}]{};
	\node[sysplace,bad]		(B32)		[below=of b32, yshift=-3mm, label={[inner sep=1pt]right:$ \mathit{B}_{3} $}]{};
	
	\begin{pgfonlayer}{bg}
	\path[->, gray!70]
	(inf1)	edge[pre]	(I1)
			edge[pre and post] (Sys1)
			edge[pre and post] (Sys2)
			edge[pre and post] (Sys3)
	(inf2)	edge[pre]	(I2)
			edge[pre and post] (Sys1)
			edge[pre and post] (Sys2)
			edge[pre and post] (Sys3)
	(inf3)	edge[pre]	(I3)
			edge[pre and post] (Sys1)
			edge[pre and post] (Sys2)
			edge[pre and post] (Sys3)
	(a11) 	edge[pre]	(Sys1)
			edge[post] 	(A11)
	(a12) 	edge[pre]	(Sys1)
			edge[post] 	(A12)
	(a13) 	edge[pre]	(Sys1)
			edge[post] 	(A13)
	(a21) 	edge[pre]	(Sys2)
			edge[post] 	(A21)
	(a22) 	edge[pre]	(Sys2)
			edge[post] 	(A22)
	(a23) 	edge[pre]	(Sys2)
			edge[post] 	(A23)
	(a31) 	edge[pre]	(Sys3)
			edge[post] 	(A31)
	(a32) 	edge[pre]	(Sys3)
			edge[post] 	(A32)
	(a33) 	edge[pre]	(Sys3)
			edge[post] 	(A33)
	(b11) 	edge[pre]	(A11)
			edge[post] 	(B12)
	(b12) 	edge[pre]	(A12)
			edge[post] 	(B12)
	(b13) 	edge[pre]	(A13)
			edge[post] 	(B12)
	(b21) 	edge[pre]	(A21)
			edge[post] 	(B22)
	(b22) 	edge[pre]	(A22)
			edge[post] 	(B22)
	(b23) 	edge[pre]	(A23)
			edge[post] 	(B22)
	(b31) 	edge[pre]	(A31)
			edge[post] 	(B32)
	(b32) 	edge[pre]	(A32)
			edge[post] 	(B32)
	(b33) 	edge[pre]	(A33)
			edge[post] 	(B32)
	;
	\end{pgfonlayer}
	\path[->]
	(h1)	edge[pre]	(R1)
			edge[pre]	(A11)
			edge[pre,in=210]	(A21)
			edge[pre,in=210,out=30]	(A31)
			edge[post]	(H1)
	(h2)	edge[pre]	(R2)
			edge[pre]	(A12)
			edge[pre]	(A22)
			edge[pre,in=225,out=25]	(A32)
			edge[post]	(H2)
	(h3)	edge[pre]	(R3)
			edge[pre]	(A13)
			edge[pre]	(A23)
			edge[pre]	(A33)
			edge[post]	(H3)
	(i1)	edge[pre,]	(Env)
			edge[post]	(I1)
	(i2)	edge[pre]	(Env)
			edge[post]	(I2)
	(i3)	edge[pre]	(Env)
			edge[post]	(I3)
	;
	\node at (R1|-inf1)	(H1){};
	\node at (R2|-inf2)	(H2){};
	\node at (R3|-inf3)	(H3){};
	\path[draw,->,rounded corners]
	(inf1)--(H1.center)--(R1);
	\path[draw,->,rounded corners]
	(inf2)--(H2.center)--(R2);
	\path[draw,->,rounded corners]
	(inf3)--(H3.center)--(R3);
\end{tikzpicture}%
		\subcaption{The represented P/T Petri game~$ \PG = \HLtoLL(\SG) $.}%
		\label{fig:motivation_update}
	\end{subfigure}%
	\caption{
		\label{fig:symmetricGameExample}		
		\emph{Client/Server}: The environment decides on one out of three computers
		to host a server.
		The system players (computers)
		can win the game by getting informed
		on the decision of the environment
		and connecting correctly.
	}
\vspace*{-5mm}
\end{figure}
Another advantage is the complexity for constructing the symbolic B\"uchi game.
Even though, the calculation of the canonical representation comes with a fixed cost, 
less comparisons can be necessary,
depending on the input system.
We implemented the new algorithm and applied our tool on the benchmark families
used in \cite{DBLP:journals/acta/GiesekingOW20} and Example~\ref{ex:example}.
The results show in general a performance increase with an
increasing number of states for most of the benchmark families.

We now introduce the example
on which we demonstrate the successive development stages of the 
presented techniques throughout the paper.
\begin{example}
\label{ex:example}
The high-level Petri game $ \SG $ depicted in \refFig{symmetricGameExampleHLPG} 
models a simplified scenario
where one out of three computers 
must host a server for the others to connect to. 
The environment nondeterministically decides which computer 
must be the host.
The places in the net are partitioned into \emph{system places} (gray)
and \emph{environment places} (white). 
An object on a place is a player in the corresponding team.
\emph{Bad places} are double-bordered.
The variables $ x,y $ on arcs are bound only locally to the transitions,
and an assignment of objects to these variables is called a \emph{mode}
of the transition.

The environment player $ \bullet $,
initially residing on place $ \mathit{Env} $,
decides via transition~$ d $ in mode $ x=\tilde{c} $
on a computer~$ \tilde{c} $
that should host the server.
The system players
(computers~$ c_1,c_2,c_3\in\basecl_1 $),
initially residing on place~$ \mathit{Sys} $,
can either directly, individually connect themselves 
to another computer via transition~$ \mathit{a} $, 
or wait for transition~$ \mathit{inf} $
to be enabled.
When they choose to connect themselves directly,
after firing transition~$ \mathit{a} $ in different modes,
the corresponding pairs of computers 
reside on place~$ \mathit{A} $. 
Since the players always have to give the possibility to proceed in the game, 
and transition $ \mathit{h} $ cannot get enabled any more,
they must take transition~$ \mathit{b} $
to the bad place~$ \mathit{B} $.
So instead, all players should initially
only allow transition~$ \mathit{inf} $
(in every possible mode $ x $).
After the decision of the environment, 
transition~$ \mathit{inf}$ can be fired in mode $x=\tilde{c} $, 
placing $ \tilde{c} $ on $ \mathit{R} $. 
In this firing, the system players get informed
on the environment's decision.
Back on place~$ \mathit{Sys} $ they can,
equipped with this knowledge, 
each connect to the 
computer~$ \tilde{c} $ via transition~$ \mathit{a} $,
putting the three objects
$ (c_1,\tilde{c}),(c_2,\tilde{c}),(c_3,\tilde{c}) $ 
on place~$ \mathit{A} $.
Thus, transition~$ \mathit{h} $
can be fired in mode~$ \tilde{c} $, and the 
game terminates with $ \tilde{c} $ in $ \mathit{H} $. Since the
system players avoided
reaching the bad place~$ \mathit{B} $,
they win the play.
This scenario is highly symmetric,
since it does not matter which computer is chosen to be the host,
as long as the others connect themselves~correctly.
\end{example}
The remainder of this paper is structured as follows:
In \refSection{PetriNetsAndPetriGames} we recall 
the definitions of 
(high-level) Petri nets and (high-level) Petri games.
In \refSection{CanonRepsOfSymbDecsets} 
we present the idea, formalization,
and construction of canonical representations.
In \refSection{DirStratGen} we show the application 
of these canonical representations in
the symbolic two-player B\"uchi game,
and how to directly generate a Petri game strategy.
In \refSection{ExperimentalResults}, experimental results of the presented techniques are shown.
Section~\ref{sec:RelatedWork} presents the related work and
\refSection{Conclusions} concludes the paper.

\section{Petri Nets and Petri Games}\label{sec:PetriNetsAndPetriGames}

This section recalls (high-level) Petri nets and -games,
and the associated concept of strategies
established in \cite{DBLP:journals/iandc/FinkbeinerO17,DBLP:journals/corr/abs-1904-05621,DBLP:journals/acta/GiesekingOW20}.
Figure~\ref{fig:symmetricGameExample} serves as an illustration.

\subsection{P/T Petri Nets}
A (marked P/T) \emph{Petri net} is a tuple 
$ \PN=(\places,\transitions,\flowfunc,\marking_\mathsf{0}) $,
with the disjoint sets of \emph{places}~$ \places $
and \emph{transitions} $ \transitions $,
a \emph{flow function} $ \flowfunc:(\places\times\transitions)\cup(\transitions\times\places)\to \N $,
and an \emph{initial marking} $ \marking_\mathsf{0} $,
where a \emph{marking} is a multi-set $ \marking:\places\to\N $ 
that indicates the number of tokens on each place.
$ \flowfunc(\mathsf{x},\mathsf{y})=n>0 $ means there is an \emph{arc} of \emph{weight} $ n $
from node $ \mathsf{x} $ to $ \mathsf{y} $ describing the flow of tokens in the net.
A transition $ \mathsf{t}\in\transitions $ is \emph{enabled} in a marking $ \marking $
if $ \forall \place\in\places: \flowfunc(\place,\transition)\leq\marking(\place) $.
If~$ \transition $ is enabled then $ \transition $ can \emph{fire} in $ \marking $,
leading to a new marking $ \marking' $ calculated by
$\forall \place\in\places: \marking'(\place)=\marking(\place)-\flowfunc(\place,\transition)+\flowfunc(\transition,\place) $.
This is denoted by $ \marking[\transition\rangle \marking' $.
$ \PN $ is called \emph{safe} if for all markings~$ \marking $ that can be reached from $ \marking_0 $
by firing a sequence of transitions we have $ \forall \place\in\places: \marking(\place)\leq 1 $.
For each transition $ \transition\in\transitions $ we define the \emph{pre-} and \emph{postset} of $ \mathsf{t} $ as the multi-sets $ \preset{\mathsf{t}}  = \flowfunc (\cdot, \mathsf{t})$
and $ \postset{\mathsf{t}}  = \flowfunc (\mathsf{t}, \cdot) $ over~$ \places $.

An example for a Petri net can be seen in \refFig{motivation_update}.
Ignoring the different shades and potential double borders for now, 
the net's places are depicted as circles, transitions as squares.
Dots represent the number of tokens on each place in the initial marking of the net.
The flow is depicted as weighted arcs between places and transitions. 
Missing weights are interpreted as arcs of weight~$ 1 $.
In the initial marking, all transitions $ \mathit{a}_{ij} $ and $ \mathit{d}_i $ are enabled.
Firing, e.g., $ \mathit{d}_1 $ results in the marking with one token on 
$ \mathit{I}_1 $, $ \mathit{Sys}_1 $, $ \mathit{Sys}_2 $, and $ \mathit{Sys}_3 $, each.

\subsection{P/T Petri Games}
Petri games are an extension of Petri nets to incomplete information games between two teams of players:
the controllable system vs. the uncontrollable environment.
The tokens on places in a Petri net represent the individual players.
The place a player resides on determines their team membership.
Particularly, a player can switch teams.
For that, the places are divided into system places and environment places.
A play of the game is a concurrent execution of transitions in the net.
During a play, the \emph{knowledge} of each player is represented by their \emph{causal history},
i.e., all visited places and used transitions to reach to current place.
Players enrich this local knowledge when synchronizing in a joint transition.
Then the complete knowledge of all participating players are exchanged.
Based on this, players allow or forbid transitions in their postset.
A transition can only fire if every player in its preset allows the execution.
The system players in a Petri game win a play if they satisfy a safety-condition,
given by a designated set of bad places they must not reach.

Formally, a (P/T) \emph{Petri game} is a tuple
$ \PG=(\sysplaces,\envplaces,\transitions,\flowfunc,\marking_\mathsf{0},\badplaces) $,
with a set of \emph{system places} $ \sysplaces $,
a set of \emph{environment places} $ \envplaces $,
and a set of \emph{bad places} $ \badplaces\subseteq\sysplaces $.
The set of all places is denoted by $ \places=\sysplaces\dot{\cup}\envplaces $,
and $ \transitions,\flowfunc,\marking_0 $ are the remaining components of a
Petri net $ \PN=(\places,\transitions,\flowfunc,\marking_\mathsf{0}) $,
called the \emph{underlying net} of $ \PG $.
We consider only Petri games with finitely many places and transitions.

In \refFig{motivation_update}, a Petri game is depicted.
We just introduced the underlying net of the game.
The system places are shaded gray, the environment places are white. 
Bad places are marked by a double border.
This Petri game is the P/T-version of the high-level Petri game described in the introduction.
The three tokens/system players residing on $ \mathit{Sys}_i $
represent the computers.
The environment player residing on $ \mathit{Env} $ makes their decision 
which computer should host a server by taking a transition $ \mathit{d}_i $.
The system players can then get informed of the decision and react accordingly 
as described above.

A \emph{strategy} for the system players 
in a Petri game $ \PG $
can be formally expressed 
as a \emph{sub-process} of the
\emph{unfolding} \cite{DBLP:series/eatcs/EsparzaH08}:
in the unfolding of a Petri net,
every loop is unrolled and every backward
branching place is expanded by 
duplicating the place, 
so that every transition
represents the unique occurrence 
of a transition during an execution of the net.
The causal dependencies in $ \PG $ 
(and thus, the knowledge of the players)
are naturally represented in its unfolding,
which is the unfolding of the 
underlying net with system-,
environment-, and bad places 
marked correspondingly.

A strategy is obtained from the unfolding 
by deleting some of the branches 
that are under control of the system players. 
This sub-process has to meet three
conditions: 
(i) The strategy must be \emph{deadlock-free}, 
to avoid trivial solutions;
it must allow the possibility to continue, 
whenever the system can proceed.
Otherwise the system players could win with the respect to 
the safety objective (bad places) if they decide to do nothing.
(ii) The system players must act in a \emph{deterministic} way,
i.e., in no reachable marking of the strategy two
transitions involving the same system player are enabled.
(iii) \emph{Justified refusal:}
if a transition is not in the strategy,
then the reason is that a system player 
in its preset forbids all occurrences 
of this transition in the strategy.
Thus, no pure environment decisions are restricted,
and system players can only allow or forbid a
transition of the original net, based on only their knowledge. 
In a \emph{winning} strategy, the system players cannot reach bad places.

In \refFig{UnfAndStrat}, we see the 
already informally described
winning strategy for the system players
in the Petri game $ \PG $.
For clarity, we only show the case in which the environment chose 
the first computer to be the host completely. 
All computers, \emph{after} getting informed of the environment's
decision, act correspondingly and connect to the first computer.
The remaining branches in the unfolding are cut off in the strategy.
The other two cases (after firing $ \mathit{inf}_2 $ or $ \mathit{inf}_3 $) are analogous.
We include the formal
definitions of unfoldings and strategies in App.~\ref{sec:appendix}.
\begin{figure}[!tb]
	\centering
	\input{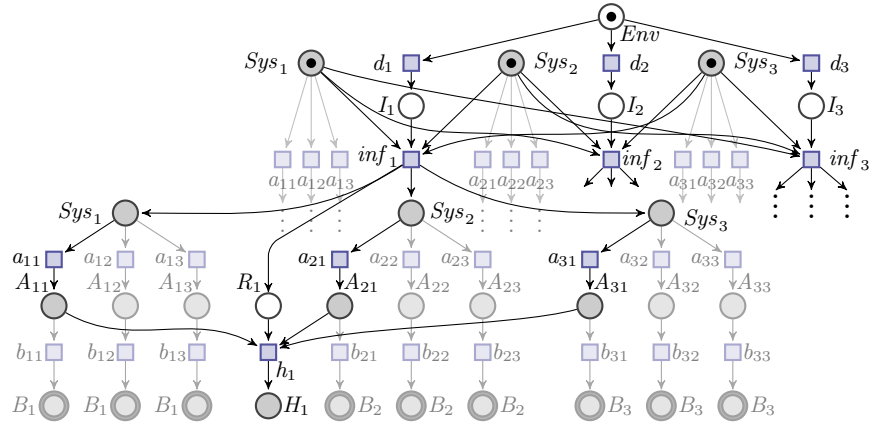}%
	\caption{\label{fig:UnfAndStrat} 
		Part of a winning strategy for the system players in $ \PG $ (solid),
		obtained by deleting some of the branches of the unfolding (solid and greyed out).
	}
\vspace{-5mm}
\end{figure}

\subsection{High-Level Petri Nets}
While in P/T Petri nets only tokens can reside on places,
in high-level Petri nets each place is equipped with a \emph{type}
that describes the form of data (also called \emph{colors}) the place can hold.
Instead of weights, each arc between a place $ p $ and a transition $ t $ is
equipped with an \emph{expression}, indicating which of these colors are taken from
or laid on $ p $ when firing $ t $.
Additionally, each transition $ t $ is equipped with a \emph{guard}
that restricts when $ t $ can fire.

Formally, a \emph{high-level} Petri net is a tuple
$ \SN=(\hlplaces,\hltransitions,\hlflowfunc,\type,\expression,\guard,\hlmarking_0) $,
with a set of \emph{places} $ \hlplaces $,
a set of \emph{transitions} $ \hltransitions $ satisfying $ \hlplaces\cap\hltransitions=\emptyset $,
a \emph{flow relation} $ \hlflowfunc\subseteq(\hlplaces\times\hltransitions)\cup(\hltransitions\times\hlplaces) $,
a \emph{type function} $ \type $ from $ \hlplaces $
such that for each place $ p $, $ \type(p) $ is the set of colors that can lie on $ p $,
a mapping~$ \expression $ that, for every transition~$ t $, assigns to each arc~$ (p,t) $ (or $ (t,p) $)
in $ \hlflowfunc $ an expression $ \expression(p,t) $
(or $ \expression(t,p) $) indicating which colors are
withdrawn from~$ p $ (or laid on $ p $) when $ t $ is fired,
a \emph{guard function} $ \guard $ that equips each transition~$ t $
with a Boolean expression $ \guard(t) $,
an \emph{initial marking}~$ \hlmarking_0 $,
where a \emph{marking} in $ \SN $ is a function
$ \hlmarking $ with domain $ \hlplaces $ 
indicating what colors reside on each place,
i.e., $ \forall p\in\hlplaces:
\hlmarking(p)\in[\type(p)\to\N] $.

\refFig{symmetricGameExampleHLPG} a high-level Petri net is depicted.
As in the P/T case, we ignore the different shadings and borders of places for now. 
The types of the places can be deducted from the surrounding arcs. 
For example, the place $ \mathit{E} $ has the type $ \type(\mathit{E}) = \basecl_2 = \{ \bullet \} $,
and the place $ \mathit{A} $ has the type $ \type(\mathit{A}) = \basecl_1\times\basecl_1 $.
Each arc is equipped with an expression, e.g., $ \expression(\mathit{Sys},\mathit{a})=y $,
and $ \expression(\mathit{a},\mathit{A})=(y,x) $.
In the given net, all guards of transitions are \emph{true} and therefore not depicted.
 
Typically, expressions and guards will contain variables.
A \emph{mode} (or valuation) $ v $ of a transition $ t\in\hltransitions $ 
assigns to each variable $ x $
occurring in $ \guard(t) $, or
any expression $ \expression(p,t) $ or $ \expression(t,p) $,
a value $ v(x) $.
The set $ \valuations(t) $ contains all modes of $ t $.
Each $ v\in\valuations(t) $ assigns a Boolean value, denoted by $ v(t) $, to $ g(t) $,
and to each arc expression $ e(p,t) $ or $ e(t,p) $ a multi-set over $ \type(p) $,
denoted by $ v(p,t) $ or $ v(t,p) $.
A transition $ t $ is \emph{enabled in a mode} $ v\in\valuations(t) $
in a marking~$ \hlmarking $
if $ v(t)=\true $ and for each arc $ (p,t)\in\hlflowfunc $ and every $ c\in\type(p) $
we have $ v(p,t)(c)\leq \hlmarking(p)(c) $.
The marking $ \hlmarking' $ reached by firing $ t $ in mode $ v $ from $ \hlmarking $
(denoted by $ \hlmarking[t.v\rangle\hlmarking' $) is calculated by
$ \forall p\in\places\ \forall c\in\type(p) : 
\hlmarking'(p)(c)=\hlmarking(p)(c)-v(p,t)(c)+v(t,p)(c) $.

A high-level Petri net $ \SN $ can be \emph{transformed} 
into a P/T Petri net $ \HLtoLL(\SN) $ with
$ \places=\{ p.c\with p\in\hlplaces, c\in\type(p) \} $,
$ \transitions=\{ t.v\with t\in\hltransitions, v\in\valuations(t), v(t)=\true \} $,
the flow~$ \flowfunc $ defined by 
$ \forall p.c\in\places\ \forall t.v\in\transitions :
\flowfunc(p.c,t.v)=v(p,t)(c)\land \flowfunc(t.v,p.c)=v(t,p)(c) $,
and initial marking $ \marking_\mathsf{0} $ defined by 
$ \forall p.c\in\places: \marking_\mathsf{0}(p.c)=\hlmarking_0(p)(c) $.
The two nets then have the same semantics:
the number of tokens on a place~$ p.c $ in a marking in~$ \HLtoLL(\SN) $ indicates
the number of colors~$ c $ on place~$ p $ 
in the corresponding marking in $ \SN $.
Firing a transition~$ t.v $ in $ \HLtoLL(\SN) $
corresponds to firing transition~$ t $ in mode~$ v $ in $ \SN $.
We say a high-level Petri net $ \SN $ \emph{represents} the P/T Petri net~$ \HLtoLL(\SN) $.

\subsection{High-Level Petri Games}
Just as P/T Petri games are structurally based on P/T Petri nets,
a \emph{high-level Petri game}
$ \SG=(\hlsysplaces,\hlenvplaces,\hltransitions,\hlflowfunc,\type,\expression,\guard,\hlmarking_0,\hlbadplaces) $
with \emph{underlying high-level net} 
$ \SN=(\hlplaces,\hltransitions,\hlflowfunc,\type,\expression,\guard,\hlmarking_0) $
divides the places~$ \hlplaces $ into 
\emph{system places}~$ \hlsysplaces $ and
\emph{environment places}~$ \hlenvplaces $.
The set $ \hlbadplaces\subseteq\hlsysplaces $ indicates the bad places.
High-level Petri games represent P/T Petri games: 
a high-level Petri game $ \SG $ 
(with underlying high-level net $ \SN $)
\emph{represents} a
P/T Petri game $ \HLtoLL(\SG) $ with underlying P/T~Petri net $ \HLtoLL(\SN) $.
The classification of places $ p.c $ in $ \HLtoLL(\SG) $
into system-, environment-, and bad places
corresponds to the places $ p $ in the high-level game.

In \refFig{symmetricGameExample}, a high-level Petri game $ \SG $
and its represented Petri game $ \PG=\HLtoLL(\SG) $ are depicted.
For the sake of clarity, we abbreviated the nodes in $ \HLtoLL(\SG) $.
Thus, e.g., the transition $ \mathit{a}.[x=c_1, y=c_2] $
is renamed to $ \mathit{a}_{12} $.
We often use notation from the represented P/T Petri game to express situations in a high-level game.

\hypertarget{target:CanonRepsOfSymbDecsets}{}%
\section{Canonical Representations of Symbolic Decision Sets}\label{sec:CanonRepsOfSymbDecsets}
In this paper, we investigate for a given high-level Petri game $ \SG $ with one environment player
whether the system players in $ \HLtoLL(\SG) $ have a winning strategy (and possibly generate one).
This problem is solved via a reduction to a symbolic two-player B\"uchi game \(\STPGc\).
The general idea of this reduction is, as in~\cite{DBLP:journals/iandc/FinkbeinerO17},
to equip the markings of the Petri game with a set of transitions for each system player (called \emph{commitment sets})
which allows the players to fix their next move.
In the generated B\"uchi game, only a subset of all interleavings is taken into account, in the way that 
the moves of the environment player are delayed until no system player can progress
without interacting with the environment.
By that, each system player gets informed about the environment's last position
during their next move.
This means that in every state, every system player 
knows the current position of the environment or learns it in the next step,
before determining their next move.
Thus, the system players can be considered to be completely informed about the game.
This is only possible due to the existence of only one environment player.
For more environment players such interleavings would not
ensure that each system player is informed (or gets informed in their next move)
about all environment positions.
The nodes of the game are called \emph{decision sets}.
In \cite{DBLP:journals/acta/GiesekingOW20}, 
symmetries in the Petri net are exploited to define
equivalence classes of decision sets, called \emph{symbolic decision sets}.
These are used to create an equivalent, but significantly smaller, B\"uchi~game.

In this section we introduce the new canonical representations of symbolic decision sets
which serve as nodes for the new B\"uchi game. 
We transfer relations between and properties of (symbolic) decision sets to the established canonical representations.
We start by recalling the definitions of symmetries in Petri nets~\cite{10.5555/115220.115224}
and of (symbolic) decision sets \cite{DBLP:journals/acta/GiesekingOW20}.

From now on we consider high-level Petri games $ \SG $
representing a safe P/T Petri game \(\HLtoLL(\SG)\) 
that has one environment player,
a bounded number of system players with a safety objective,
and where the system cannot proceed infinitely without the environment.

\subsection{Symmetric Nets}
High-level representations are often created using \emph{symmetries}~\cite{10.5555/115220.115224} in a Petri net.
Conversely,
in some high-level nets, symmetries can be read directly from the given specification.
A class of nets which allow this are the so called \emph{symmetric nets} (SN) \cite{Chiola1991}.\footnote{Symmetric Nets were formerly known as Well-Formed Nets (WNs). The renaming was part of the ISO standardization \cite{DBLP:conf/forte/HillahKPT06}.}
In symmetric nets, the types of places
are selected from given (finite) \emph{basic color classes} $ \basecl_1,\dots,\basecl_n $.
For every place $ p\in\hlplaces $, we have 
$ \type(p)=\basecl_1^{p_1}\!\times\!\cdots\!\times\!\basecl_n^{p_n} $
for natural numbers $ p_1,\dots,p_n\in\N $,
where $ \basecl_i^{p_i} $ denotes the $ p_i $-fold Cartesian product of $ \basecl_i $.\footnote{
In the Cartesian products $ \type(p) $ and $ \valuations(t) $, we omit all $ \basecl_i^{x} $ with $ x=0 $ (empty sets).}
The possible values of variables contained in guards and arc expressions
are also basic classes. 
Thus, the modes of each transition~\mbox{$ t\in\hltransitions $}
are also given by a Cartesian product 
$ \valuations(t)=\basecl_1^{t_1}\!\times\!\cdots\!\times\!\basecl_n^{t_n} $.
Guards and arc expressions treat all elements in a color class equally.

\begin{example}
	The underlying high-level net $ \SN $ in Fig.~\ref{fig:symmetricGameExample}
	is a symmetric net with basic color classes $ \basecl_1=\{ c_1,c_2,c_3 \} $
	and $ \basecl_2=\{ \bullet \} $.
	We have, e.g., $ \valuations(a)=\basecl_1\!\times\basecl_1 $
	(the two coordinates representing $ y $ and $ x $),
	and therefore, $ a_1=2,a_2=0 $.
\end{example}

\begin{remark}
	In general, each basic color class $ \basecl_i $ is possibly
	partitioned into \emph{subclasses} $ \basecl_i=\bigcup_{q=1}^{n_i}\basecl_{i,q} $.
	In the main body of this paper, we omit this partition. 
	The detailed proofs in the appendix take the general case into account.
\end{remark}

\begin{proposition}\label{prop:wlogSN}
	Any high-level Petri net can be transformed into a SN with the same basic structure,
	same place types,
	and equivalent arc labeling (cf.~\cite{Chiola1991}).
\end{proposition}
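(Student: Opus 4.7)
The plan is to exhibit a refinement procedure that turns any high-level Petri net $\SN$ into a symmetric net $\SN'$ whose basic structure, place types, and arc labelings agree with those of $\SN$ up to a purely syntactic repackaging. The key observation is that symmetric nets only demand that guards and arc expressions treat elements \emph{within} a (sub)class uniformly; therefore any asymmetric distinctions made in the original specification can be pushed into the subclass partition of the basic color classes.

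First, I would identify for every place $p$ the atomic datatypes from which $\type(p)$ is composed and introduce one basic color class $\basecl_i$ per such datatype, populated by exactly the colors occurring in it. Each place type then automatically has the form $\basecl_1^{p_1}\!\times\!\cdots\!\times\!\basecl_n^{p_n}$ required by SN, so basic structure and place types are preserved by construction. The modes of each transition $t$ similarly become a Cartesian product $\basecl_1^{t_1}\!\times\!\cdots\!\times\!\basecl_n^{t_n}$ over the same basic classes.

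Next, I would rewrite guards and arc expressions in SN-compatible form by refining the subclass partitions of the basic color classes as dictated by the syntax. Whenever a constant color $c$ or any specific element is distinguished by a guard or an arc expression, I put it into its own singleton subclass $\basecl_{i,q}$ of the enclosing class $\basecl_i$; tests against $c$ then become tests of subclass membership, which SN permits, and arc expressions that place or consume $c$ can be written using the projection onto this singleton subclass. Equality and inequality between variables are already symmetric operations and require no refinement. In the worst case, where a guard or expression is genuinely asymmetric in ways that cannot be captured by a coarser partition, one may fall back to the extreme refinement in which every element of $\basecl_i$ forms its own subclass, in which case uniform treatment within a subclass is vacuous. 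This step is the main obstacle: one must check that every construct appearing in the original guards and arc expressions can indeed be encoded using membership predicates and projections over the refined subclass partitions, without enlarging the set of modes of any transition.

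Finally, I would argue semantic equivalence by exhibiting a bijection between the modes of each transition in $\SN$ and the modes in $\SN'$ under which enabledness and the firing rule coincide. Since the refinement neither merges color classes nor alters place types or the incidence structure, this bijection is essentially the identity, and equivalence of the arc labelings is then a direct syntactic comparison. The claim of equivalence of entire firing sequences follows by a straightforward induction. As the proposition is attributed to Chiola~\cite{Chiola1991}, a fully worked-out transformation can be found there; the sketch above follows the standard refinement recipe.
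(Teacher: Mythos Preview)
The paper does not supply its own proof of this proposition: it is stated with the citation ``(cf.~\cite{Chiola1991})'' and is treated as a known result imported from the literature on Well-Formed/Symmetric Nets; no argument for it appears in the appendix. Your sketch---introducing basic color classes from the atomic datatypes, pushing asymmetric distinctions into static subclass partitions (with singleton subclasses for explicitly named constants, and the degenerate all-singleton partition as a fallback), and then observing that the firing rule is preserved under the identity on modes---is precisely the standard refinement recipe from Chiola et al., so there is nothing to compare and your proposal is in line with the intended source.
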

The \emph{symmetries} $ \symmetries_\SN $ in a symmetric net $ \SN $
are all tuples $ s=(s_1,\dots,s_n) $ such that each $ s_i $
is a permutation on $ \basecl_i $. 
A symmetry $ s $ can be applied to a single color $ c\in\basecl_i $
by $ s(c)=s_i(c) $.
The application to tuples, e.g., colors on places or transition modes,
is defined by the application in each entry.
The set $ \symmetries_\SN $, 
together with the function composition $ \circ $, 
forms a group
with identity $ (\identity_{\basecl_i})_{i=1}^n $.
In the represented P/T Petri net $ \HLtoLL(\SN) $, 
symmetries can be applied to places 
$ \place=p.c\in\places $ and transitions~$ \transition=t.v\in\transitions $
by defining $ s(p.c)=p.s(c) $ and $ s(t.v)=t.s(v)$.
The structure of symmetric nets ensures
$ \forall s\in\symmetries_\SN\ \forall \transition\in\transitions:
\preset{s(\transition)}=s(\preset{\transition}) $ and 
$ \postset{s(\transition)}=s(\postset{\transition}) $.
Thus, symmetries are compatible with 
the firing relation; $ \forall s\in\symmetries_\SN : 
\marking[\transition\rangle\marking' \Leftrightarrow
s(\marking)[s(\transition)\rangle s(\marking') $.
In a symmetric net, we can w.l.o.g. assume the initial marking $ \hlmarking_0 $
to be \emph{symmetric}, i.e., $ \forall s\in\symmetries_\SN :s(\hlmarking_0)=\hlmarking_0 $.

\subsection{Symbolic Decision Sets}
A \emph{decision set} is a set
$ \decset\subseteq \places\!\times\!(\powerset(\transitions)\cup\top) $.
An element $ (\place,\comset)\in\decset $
with $ \comset\subseteq\postset{\place} $
indicates there is a player on place~$ \place $
who allows all transitions in $ \comset $ to fire.
$ \comset $ is then called a \emph{commitment set}.
An element $ (\place,\top)\in\decset $ indicates the player on place~$ \place $
has to 
choose a commitment set in the next step.
The step of this decision is called \emph{$\top $-resolution}.

In a $ \top $-resolution, each $ \top $-symbol in a decision set $ \decset $ is replaced with a suitable commitment set.
This relation is denoted by $ \decset[\top\rangle \decset' $.
If there are no $ \top $-symbols in $ \decset $,
a transition $ \transition $ is \emph{enabled}, if
$ \forall \place\in\preset{\transition}\ \exists (\place,\comset)\in\decset : \transition\in\comset $,
i.e., there is a token on every place in $ \preset{\transition} $ (as for markings) and additionally, 
$ \transition $ is in every commitment set of such a token.
In the process of firing an enabled transition, the tokens are moved accordingly to the flow~$ \flowfunc $.
The moved or generated tokens on system places are then equipped with a $ \top $-symbol,
while the tokens on environment places allow \emph{all} transitions 
that they are involved in.
This relation is denoted by $ \decset[\transition\rangle\decset' $.
The \emph{initial decision set} is given by
$ 
\decset_0=
\{ (\place,\{\transition\in\transitions\with \place\in\preset{\transition} \}) \with \place\in\envplaces\cap\marking_{\mathsf{0}} \}
\cup
\{ (\place,\top) \with \place\in\sysplaces\cap\marking_{\mathsf{0}} \}
  $,
i.e., the environment in the initial marking allows all possible transitions, 
the system players still have to choose a commitment set.
\begin{example}\label{ex:DecisionFirings}
	Assume in the Petri game in Fig.~\ref{fig:symmetricGameExampleHLPG} that
	the computers initially allow transition~$ \mathit{inf} $ in every mode.
	The environment player on $ \mathit{Env} $ fires transition~$ \mathit{d} $ in mode~$ c_1 $.
	After that, the system gets informed of the environment's decision 
	via transition~$ \mathit{inf}$ in mode~$c_1 $.
	The system players, now back on $ \mathit{Sys} $,
	decide via $ \top $-resolution they all want to assign themselves to $ c_1 $.
	This corresponds to the following sequence of decision sets, where we abbreviate $ \mathit{Sys} $ by $ \mathit{S} $.
	\begin{center}
	\begin{tikzpicture}[scale = 0.675, transform shape]\renewcommand{\xdis}{5cm}
	\tikzstyle{decisionset} = [rectangle,fill=lightgray,
	draw,align=center,minimum size=7mm]
	\tikzstyle{env} = [fill=white]
	\tikzstyle{sys} = [rounded corners]
	\node[decisionset,env]	(D0)	{
		$ \decset_0  $};
	\node[decisionset,env] at (D0) [xshift=3.2cm]	(D1)	{
		$ \left( \mathit{Env}.\bullet, \{ \mathit{d}.c_1,\mathit{d}.c_2,\mathit{d}.c_3 \} \right) $\\
		$ \left( \mathit{S}.c_1, \{ \mathit{inf}\!.c_1,\mathit{inf}\!.c_2,\mathit{inf}\!.c_3 \} \right)  $\\
		$ \left( \mathit{S}.c_2, \{ \mathit{inf}\!.c_1,\mathit{inf}\!.c_2,\mathit{inf}\!.c_3 \} \right)  $\\
		$ \left( \mathit{S}.c_3, \{ \mathit{inf}\!.c_1,\mathit{inf}\!.c_2,\mathit{inf}\!.c_3 \} \right)  $
	};
	\node[decisionset,env] at (D1) [xshift=5.3cm]	(D2)	{
		$ \left( \mathit{I}.c_1, \{ \mathit{inf}\!.c_1\} \right) $\\
		$ \left( \mathit{S}.c_1, \{ \mathit{inf}\!.c_1,\mathit{inf}\!.c_2,\mathit{inf}\!.c_3 \} \right)  $\\
		$ \left( \mathit{S}.c_2, \{ \mathit{inf}\!.c_1,\mathit{inf}\!.c_2,\mathit{inf}\!.c_3 \} \right)  $\\
		$ \left( \mathit{S}.c_3, \{ \mathit{inf}\!.c_1,\mathit{inf}\!.c_2,\mathit{inf}\!.c_3 \} \right)  $
	};
	\node[decisionset,env] at (D2) [xshift=4.5cm]	(D3)	{
		$ \left( \mathit{R}.c_1, \{ g.c_1\} \right) $\\
		$ \left( \mathit{S}.c_1, \top \right)  $\\
		$ \left( \mathit{S}.c_2, \top \right)  $\\
		$ \left( \mathit{S}.c_3, \top \right)  $
	};
	\node[decisionset,env] at (D3) [xshift=3.2cm]	(D4)	{
		$ \left( \mathit{R}.c_1, \{ g.c_1\} \right) $\\
		$ \left( \mathit{S}.c_1, \{a.(c_1,c_1)\} \right)  $\\
		$ \left( \mathit{S}.c_2, \{a.(c_2,c_1)\} \right)  $\\
		$ \left( \mathit{S}.c_3, \{a.(c_3,c_1)\} \right)  $
	};
	\draw[->]
	(D0) edge node[above]{$ \top $} (D1)
	(D1) edge node[above]{$ \mathit{d}.c_1 $} (D2)
	(D2) edge node[above]{$ \mathit{inf}.c_1 $} (D3)
	(D3) edge node[above]{$ \top $} (D4)
	;
	\end{tikzpicture}
	\end{center}
\end{example} 
A high-level Petri game $ \SG $ has the same symmetries $ \symmetries_\SN $
as its underlying symmetric net $ \SN $.
They can be applied to decision sets
by applying them to every occurring color $ c $ or mode $ v $.
For a decision set $ \decset $, an equivalence class
$ \{ s(\decset)\with s\in\symmetries_\SN \} $
is called the \emph{symbolic decision set} of $ \decset $,
and contains symmetric situations in the Petri game.
In \cite{DBLP:journals/acta/GiesekingOW20}, these equivalence classes replace 
individual decision sets in the two-player B\"uchi game
to achieve a substantial state space reduction.

\begin{example}
	Consider the second to last decision set in the sequence above.
	This situation is symmetric to the cases where the environment 
	chose computer $ c_2 $ or $ c_3 $ as the host.
	In the example $ \SG $, we have the two color classes $ \basecl_1 $ and $ \basecl_2 $.
	Since $ |\basecl_2|=1 $, the only permutation on $ \basecl_2 $ is $ \identity_{\basecl_2} $.
	Thus, the symmetries in $ \SG $ are the permutations on $ \basecl_1 $.
	Symmetries transform the elements in the symbolic decision set into each other.
	The corresponding symbolic decision set contains the following three elements $ \decset$, $\decset' $, and $ \decset'' $:
	\begin{center}
		\begin{tikzpicture}[scale = 0.675, transform shape]\renewcommand{\xdis}{5cm}
	\tikzstyle{decisionset} = [rectangle,fill=lightgray,
	draw,align=center,minimum size=7mm]
	\tikzstyle{env} = [fill=white]
	\tikzstyle{sys} = [rounded corners]
	\node[decisionset,env](D4)	{
		$ \left( \mathit{R}.c_1, \{ g.c_1\} \right) $\\
		$ \left( \mathit{S}.c_1, \top \right)  $\\
		$ \left( \mathit{S}.c_2, \top \right)  $\\
		$ \left( \mathit{S}.c_3, \top \right)  $
	};
	\node[decisionset,env] at (D4) [xshift=-7cm] (D4')	{
		$ \left( \mathit{R}.c_2, \{ g.c_2\} \right) $\\
		$ \left( \mathit{S}.c_1, \top \right)  $\\
		$ \left( \mathit{S}.c_2, \top \right)  $\\
		$ \left( \mathit{S}.c_3, \top \right)  $
	};
	\node[decisionset,env] at (D4) [xshift=7cm] (D4'')	{
		$ \left( \mathit{R}.c_3, \{ g.c_3\} \right) $\\
		$ \left( \mathit{S}.c_1, \top \right)  $\\
		$ \left( \mathit{S}.c_2, \top \right)  $\\
		$ \left( \mathit{S}.c_3, \top \right)  $
	};
	\path[->]
	(D4) edge[in=20,out=10,loop] node[right] {$ \mathit{id}_{\basecl_1} $} (D4)
	(D4) edge[in=160,out=170,loop] node[left] {$ c_2\leftrightarrow c_3 $} (D4)
	(D4) edge node[above,pos=0.6] {$ c_1\leftrightarrow c_2 $} (D4')
	([yshift=-3mm]D4.west) edge node[below] {$ c_1\mapsto c_2\mapsto c_3\mapsto c_1 $} ([yshift=-3mm]D4'.east)
	(D4) edge node[above,pos=0.6] {$ c_1\leftrightarrow c_3 $} (D4'')
	([yshift=-3mm]D4.east) edge node[below] {$ c_1\mapsto c_3\mapsto c_2\mapsto c_1 $} ([yshift=-3mm]D4''.west)
	;
	
	\node at (D4) [xshift = 1.45cm, yshift=-0.65cm](D4name) {\large$ \decset $};
	\node at (D4') [xshift = 1.5cm, yshift=0.65cm](D4'name) {\large$ \decset' $};
	\node at (D4'') [xshift = -1.55cm, yshift=0.65cm](D4''name) {\large$ \decset'' $};
	\end{tikzpicture}
	\end{center}
\end{example}
Each edge between two decision sets corresponds to the application of a symmetry. 
The abbreviated notation $ c\mapsto c' \mapsto c'' \mapsto c $ means that 
each element is mapped to the next in line. Analogously, $ c \leftrightarrow c' $
means that $ c $ and $ c' $ are switched.

\subsection{Canonical Representations}
In order to exploit symmetries to reduce the size of the state space, 
one aims to consider only one representative of each of the equivalence classes
induced by the symmetries.
This can be done either by checking whether a newly generated state
is equivalent to any already generated one, or
by transforming each newly generated state into an equivalent,
canonical representative.
In \cite{DBLP:journals/acta/GiesekingOW20} we consider the former approach. 
The nodes of the symbolic B\"uchi game are symbolic decision sets.
In the construction, an arbitrary representative $ \overline{\decset} $
is chosen for each of these equivalence classes. 
This means, when reaching a new node~$ \decset' $, 
we must apply every symmetry $ s $ to test whether there 
already is a representative $ \overline{\decset'}=s(\decset') $,
or whether $ \decset' $ is in a new symbolic decision set.

In this section, we now aim at the second approach and
define the new \emph{canonical representations} of symbolic decision sets.
For that, we first define \emph{dynamic representations}, 
and then show how to construct a canonical one.
We use these instead of (arbitrary representatives of) 
symbolic decision sets in the construction of 
the symbolic B\"uchi game in \refSection{DirStratGen}.

\subsubsection{Dynamic Representations}
A dynamic representation is an abstract description of a symbolic decision set.
It consists of 
dynamic subclasses of variables, and a dynamic decision set
where these dynamic subclasses replace explicit colors.
Any (valid) assignment of values to the variables in the dynamic 
subclasses results in a decision set in the equivalence class.

Formally, a \emph{dynamic representation}
is a tuple $ \rep=(\dynsubclasses,\decsetfct) $,
with the set of \emph{dynamic subclasses} 
$ \dynsubclasses=\{ \dynsubcl_{i}^j\with 1\leq i \leq n, 1\leq j \leq m_{i} \} $
for natural numbers $ m_{i} $, and 
a \emph{dynamic decision set }
$ \decsetfct $.
A dynamic subclass $ \dynsubcl_{i}^j $ contains a finite number of variables with values in $ \basecl_{i} $. 
Each $ \dynsubcl_{i}^j $ has a cardinality $ |\dynsubcl_{i}^j| $
that indicates the number of variables.
In total, there are as many variables with values in $ \basecl_i $ as there are colors, i.e., $ \sum_{j=1}^{m_{i}} |\dynsubcl_{i}^j|=|\basecl_{i}| $.
An \emph{assignment} $ \validassignment:\bigcup_{i=1}^n \basecl_i\to \dynsubclasses $ is \emph{valid} if it respects the cardinality of dynamic subclasses, i.e., 
$ {| \{ c\in\basecl_{i} \with \validassignment(c)=\dynsubcl_{i}^j \} |}=|\dynsubcl_{i}^j| $.
Every valid assignment of colors $ c\in\basecl_{i} $ to the $ \dynsubcl_{i}^j$, $1\leq j\leq m_{i} $, gives a partition of~$ \basecl_{i} $.
A dynamic decision set is the same as a decision set,
with dynamic subclasses replacing explicit colors.
For every decision set $ \decset $ in a symbolic decision set 
with dynamic representation $ \rep=(\dynsubclasses,\decsetfct) $,
there is a valid assignment $ \validassignment_\decset $
such that $ \decset=\validassignment_\decset^{-1}(\decsetfct) $.
In general, there are several dynamic representations of a symbolic decision~set.

\begin{example}\label{ex:Representation}
	Consider the symbolic decision set from the last example.
	We can naively build a dynamic representation by  taking one of the decision sets, and replacing each color by a dynamic subclass of cardinality $ 1 $.
	This results in as many dynamic subclasses as there are colors, i.e.,
	$ \dynsubclasses=\{ \dynsubcl_1^1,\dynsubcl_1^2,\dynsubcl_1^3,\dynsubcl_2^1 \} $
	with $ |\dynsubcl_i^j|=1 $ for all $ i,j $. 
	Below, the resulting dynamic decision set $ \decsetfct $ is depicted,
	with valid assignments that lead to elements $ \decset$, $\decset' $, and $ \decset'' $.
	\begin{center}
		\begin{tikzpicture}[scale = 0.675, transform shape]\renewcommand{\xdis}{5cm}
		\tikzstyle{decisionset} = [rectangle,fill=lightgray,
		draw,align=center,minimum size=7mm]
		\tikzstyle{env} = [fill=white]
		\tikzstyle{sys} = [rounded corners]
		\node[decisionset,env,very thick](Rep)	{
			$ \left( \mathit{R}.\dynsubcl_1^1, \{ g.\dynsubcl_1^1\} \right) $\\
			$ \left( \mathit{S}.\dynsubcl_1^1, \top \right)  $\\
			$ \left( \mathit{S}.\dynsubcl_1^2, \top \right)  $\\
			$ \left( \mathit{S}.\dynsubcl_1^3, \top \right)  $
		};
		\node[decisionset,env] at (Rep) [xshift=4.1cm] (D4)	{
			$ \left( \mathit{R}.c_1, \{ g.c_1\} \right) $\\
			$ \left( \mathit{S}.c_1, \top \right)  $\\
			$ \left( \mathit{S}.c_2, \top \right)  $\\
			$ \left( \mathit{S}.c_3, \top \right)  $
		};
		\node[decisionset,env] at (Rep) [xshift=9.9cm] (D4')	{
			$ \left( \mathit{R}.c_2, \{ g.c_2\} \right) $\\
			$ \left( \mathit{S}.c_1, \top \right)  $\\
			$ \left( \mathit{S}.c_2, \top \right)  $\\
			$ \left( \mathit{S}.c_3, \top \right)  $
		};
		\node[decisionset,env] at (Rep) [xshift=15.5cm] (D4'')	{
			$ \left( \mathit{R}.c_3, \{ g.c_3\} \right) $\\
			$ \left( \mathit{S}.c_1, \top \right)  $\\
			$ \left( \mathit{S}.c_2, \top \right)  $\\
			$ \left( \mathit{S}.c_3, \top \right)  $
		};
		\draw[->,thick]
		(Rep) edge node[above] {$ \dynsubcl_1^j \mapsfrom c_j $} (D4);
		\draw[->,thick]
		(Rep.north east) -- node[below, pos=0.82, align=center] {$ \dynsubcl_1^2 \mapsfrom c_1,\dynsubcl_1^1\mapsfrom c_2,$ \\ $\dynsubcl_1^3\mapsfrom c_3 $} ([xshift=7cm]Rep.north east) -- (D4');
		\draw[->,thick]
		(Rep.south east) -- node[above, pos=0.92, align=center] {$ \dynsubcl_1^3 \mapsfrom c_1,\dynsubcl_1^2\mapsfrom c_2,$ \\ $\dynsubcl_1^1\mapsfrom c_3 $} ([xshift=12.5cm]Rep.south east) -- (D4'')
		;
		\node at (Rep) [xshift = 1.55cm, yshift=-0.65cm](D4name) {\large$ \decsetfct $};
		\node at (D4) [xshift = 1.45cm, yshift=-0.65cm](D4name) {\large$ \decset $};
		\node at (D4') [xshift = 1.5cm, yshift=0.65cm](D4'name) {\large$ \decset' $};
		\node at (D4'') [xshift = -1.55cm, yshift=0.65cm](D4''name) {\large$ \decset'' $};
		\end{tikzpicture}
	\end{center}
	The element $ (\mathit{R}.\dynsubcl_1^1, \{ g.\dynsubcl_1^1\}) $, e.g.,
	represents one arbitrary color $ c $ (since $ |\dynsubcl_1^1|=1 $) on place $ \mathit{R} $
	with $ g.c $ in its commitment set. The same color is on place $ \mathit{S} $, equipped with a $ \top $-symbol. 
\end{example}
\subsubsection{Minimality}
We notice in the example above that $ \dynsubcl_1^2 $ and $ \dynsubcl_1^3 $
appear in the same contexts in $ \decsetfct $.
The \emph{context} $ \decsetcon(\dynsubcl_{i}^j) $ 
of a dynamic subclass $ \dynsubcl_{i}^j $ is defined as the set of tuples
in $ \decsetfct $ where exactly one appearance of $ \dynsubcl_{i}^j $
is replaced by a symbol $ \insertion $. So in our example
$ \decsetcon(\dynsubcl_1^2)=\{ (\mathit{S}.\insertion, \top ) \}=\decsetcon(\dynsubcl_1^3) $, 
and $ \decsetcon(\dynsubcl_1^1) = \{
(\mathit{S}.\insertion, \top ),
(\mathit{R}.\insertion, \mathit{g}.c_1 ),
(\mathit{R}.c_1, \mathit{g}.\insertion )
\} $.
This means $ \dynsubcl_1^2 $ and $ \dynsubcl_1^3 $ can be \emph{merged}
into a new dynamic subclass of cardinality $ 2 $.
The resulting new dynamic representation is given by
$ \dynsubclasses_{\mathit{min}}=\{ \dynsubcl_1^1,\dynsubcl_1^2,\dynsubcl_2^1 \} $
with $ |\dynsubcl_1^1|=|\dynsubcl_2^1|=1 $ and $ |\dynsubcl_1^2|=2 $,
and $ \decsetfct_{\mathit{min}}=\left\{ ( \mathit{R}.\dynsubcl_1^1, \{ g.\dynsubcl_1^1\} ),
( \mathit{S}.\dynsubcl_1^1, \top ),
( \mathit{S}.\dynsubcl_1^2, \top ) \right\} $.
\begin{center}
	\begin{tikzpicture}[scale = 0.675, transform shape]\renewcommand{\xdis}{5cm}
	\tikzstyle{decisionset} = [rectangle,fill=lightgray,
	draw,align=center,minimum size=7mm]
	\tikzstyle{env} = [fill=white]
	\tikzstyle{sys} = [rounded corners]
	\node[decisionset,env,very thick](Rep)	{
		$ \left( \mathit{R}.\dynsubcl_1^1, \{ g.\dynsubcl_1^1\} \right) $\\
		$ \left( \mathit{S}.\dynsubcl_1^1, \top \right)  $\\
		$ \left( \mathit{S}.\dynsubcl_1^2, \top \right)  $
	};
	\node[decisionset,env] at (Rep) [xshift=5cm] (D4)	{
		$ \left( \mathit{R}.c_1, \{ g.c_1\} \right) $\\
		$ \left( \mathit{S}.c_1, \top \right)  $\\
		$ \left( \mathit{S}.c_2, \top \right)  $\\
		$ \left( \mathit{S}.c_3, \top \right)  $
	};
	\node[decisionset,env] at (Rep) [xshift=10.3cm] (D4')	{
		$ \left( \mathit{R}.c_2, \{ g.c_2\} \right) $\\
		$ \left( \mathit{S}.c_1, \top \right)  $\\
		$ \left( \mathit{S}.c_2, \top \right)  $\\
		$ \left( \mathit{S}.c_3, \top \right)  $
	};
	\node[decisionset,env] at (Rep) [xshift=15.5cm] (D4'')	{
		$ \left( \mathit{R}.c_3, \{ g.c_3\} \right) $\\
		$ \left( \mathit{S}.c_1, \top \right)  $\\
		$ \left( \mathit{S}.c_2, \top \right)  $\\
		$ \left( \mathit{S}.c_3, \top \right)  $
	};
	\draw[->,thick]
	(Rep) edge node[align=center] {$ \dynsubcl_1^j \mapsfrom c_1 $\\$ \dynsubcl_1^2 \mapsfrom c_2,c_3 $} (D4);
	\draw[->,thick]
	(Rep.north) -- ([yshift=2mm]Rep.north) -- node[below, pos=0.90, align=center] {$\dynsubcl_1^1\mapsfrom c_2$ \\$ \dynsubcl_1^2 \mapsfrom c_1,c_3,$  } ([xshift=7.3cm,yshift=2mm]Rep.north east) -- (D4');
	\draw[->,thick]
	(Rep.south) -- ([yshift=-2mm]Rep.south) -- node[above, pos=0.93, align=center] { $\dynsubcl_1^1\mapsfrom c_3 $\\$ \dynsubcl_1^2 \mapsfrom c_1, c_2,$ } ([xshift=12.5cm,yshift=-2mm]Rep.south east) -- (D4'')
	;
	\node at (Rep) [xshift = 1.8cm, yshift=-0.7cm](D4name) {\large$ \decsetfct_{\mathit{min}} $};
	\node at (D4) [xshift = 1.45cm, yshift=-0.65cm](D4name) {\large$ \decset $};
	\node at (D4') [xshift = 1.5cm, yshift=0.65cm](D4'name) {\large$ \decset' $};
	\node at (D4'') [xshift = -1.55cm, yshift=0.65cm](D4''name) {\large$ \decset'' $};
	\end{tikzpicture}
\end{center}

\emph{Minimal representations} do not contain 
any two dynamic subclasses $ \dynsubcl_{i}^j,\dynsubcl_{i}^k $
with the same context.
Given a dynamic representation, it is algorithmically simple to construct a 
minimal representation of the same symbolic decision set 
by merging all dynamic subclasses with the respectively same context.
The dynamic representation above that resulted from merging the subclasses
is therefore minimal.
Still, minimality is not enough to obtain a unique canonical representation,
since we can permute the indices $ j $ of the dynamic subclasses $ \dynsubcl_{i}^j $.
\begin{lemma}\label{lem:minimalAlmostUnique}
	The minimal representations of a symbolic decision set are unique up to a permutation of the dynamic subclasses.
\end{lemma}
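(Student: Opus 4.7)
The plan is to show that any minimal representation of a symbolic decision set is essentially determined by any single decision set $\decset$ in that class. Starting from two minimal representations $\rep_1 = (\dynsubclasses_1, \decsetfct_1)$ and $\rep_2 = (\dynsubclasses_2, \decsetfct_2)$ of the same symbolic decision set, I would fix some $\decset$ in this class together with valid assignments $\validassignment_1$ and $\validassignment_2$. The goal is then to construct a bijection $\pi: \dynsubclasses_1 \to \dynsubclasses_2$ that respects basic class indices and cardinalities and satisfies $\pi(\decsetfct_1) = \decsetfct_2$.

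The central claim to establish is the following: in any minimal representation $\rep = (\dynsubclasses, \decsetfct)$ with valid assignment $\validassignment$ for $\decset$, two colors $c, c' \in \basecl_i$ satisfy $\validassignment(c) = \validassignment(c')$ if and only if they share the same \emph{color-context} in $\decset$, defined in analogy to $\decsetcon$ as the set of tuples of $\decset$ with exactly one occurrence of $c$ (resp.\ $c'$) replaced by $\insertion$. For the easy direction, if $\validassignment(c) = \validassignment(c')$, the transposition that swaps $c$ and $c'$ fixes $\decset$ set-wise, since it only permutes colors within a single subclass preimage; hence $c$ and $c'$ have identical color-contexts.

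For the converse direction, which I expect to be the main obstacle, I would exploit a precise correspondence between color-contexts in $\decset$ and subclass-contexts in $\decsetfct$. Since $\decset = \validassignment^{-1}(\decsetfct)$ is obtained by substituting, in each tuple, each occurring subclass by every color in its preimage, a direct calculation shows that the color-context of $c$ equals $\validassignment^{-1}$ applied to the subclass-context of $\validassignment(c)$. If $c$ and $c'$ had identical color-contexts but were assigned to distinct subclasses, then by injectivity of this expansion on sets of tuples over subclasses, those two subclasses would necessarily have identical contexts in $\decsetfct$, contradicting minimality of $\rep$.

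With this lemma in hand, the remainder is essentially bookkeeping. Both $\validassignment_1$ and $\validassignment_2$ partition the colors of each $\basecl_i$ in exactly the same way, namely by color-context in $\decset$, so the rule $\pi(\validassignment_1(c)) := \validassignment_2(c)$ yields a well-defined bijection between $\dynsubclasses_1$ and $\dynsubclasses_2$ that preserves basic classes and cardinalities. To verify $\pi(\decsetfct_1) = \decsetfct_2$, I would invoke injectivity of the expansion one final time: by construction $\validassignment_2^{-1}(\pi(\decsetfct_1)) = \validassignment_1^{-1}(\decsetfct_1) = \decset = \validassignment_2^{-1}(\decsetfct_2)$, and hence $\pi(\decsetfct_1) = \decsetfct_2$.
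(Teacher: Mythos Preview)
Your approach is correct and genuinely different from the paper's. The paper argues by factoring through the ``naive'' representations with only cardinality-$1$ subclasses: any two such representations of the same class are trivially related by a permutation, and since the merging pattern down to a minimal representation is determined by equality of contexts, a permutation can be chosen that is compatible with the merging on both sides and therefore descends to a permutation between the two minimal representations. Your route instead identifies an \emph{intrinsic} invariant: the partition of colors by color-context in a fixed $\decset$. You show that any minimal representation realizes exactly this partition via its valid assignment, so two minimal representations must induce the same partition of colors and hence be related by the obvious bijection $\pi$. Your argument is more conceptual and yields a concrete description of what the subclasses of a minimal representation actually are; the paper's is more syntactic but avoids having to formalize and verify the ``injectivity of expansion'' principle you rely on.

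One small point to tighten: in the forward direction you write that the transposition $\tau$ fixing $\decset$ ``hence'' gives identical color-contexts. What follows immediately is only $\tau(\decsetcon_\decset(c)) = \decsetcon_\decset(c')$, not equality of the two context sets. Equality does hold, but the cleanest justification is the very formula you establish for the converse: $\decsetcon_\decset(c) = \validassignment^{-1}(\decsetcon_\decsetfct(\validassignment(c)))$ depends only on $\validassignment(c)$, so $\validassignment(c)=\validassignment(c')$ gives the result directly. You may simply drop the transposition argument and invoke the formula for both directions.
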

This lemma can be proved using the observation that
every minimal representation can be reached 
from a dynamic representation that contains only 
dynamic subclasses of cardinality $ 1 $ (as the one in \refEx{Representation}) by merging.

\subsubsection{Ordering}
We can choose one of the minimal representations by \emph{ordering}
the dynamic subclasses. 
In the following, we give a possible way to do that.
We display the dynamic decision set $ \decsetfct $ as a matrix,
with rows and columns indicating (tuples of) dynamic subclasses $ \dynsubcl $.
An element of the matrix at entry $ (Z,Z') $ returns all tuples $ (p,t) $
satisfying $ (p.Z,\comset)\in\decsetfct $  and $ t.\dynsubcl'\in\comset $
for a commitment set $ \comset $.
Also, all tuples $ (p,\top) $ satisfying $ {(p.\dynsubcl,\top)\in\decsetfct} $
and all tuples $ (p,\emptyset) $ satisfying $ (p.\dynsubcl,\emptyset)\in\decsetfct $
are returned (in these cases, $ \dynsubcl' $ is neglected). 

The elements of the matrix are in $ \powerset( \places\!\times\!(\transitions\cup\{\top,\emptyset\})) $.
Since this set is finite, we can give an arbitrary, but fixed, total order on it.
This order can be extended to the matrices over the set
(the lexicographic order by row-wise traversion through a matrix).
Then we can determine a permutation
such that, when applied to the dynamic subclasses,
the matrix is \emph{minimal with respect to the lexicographic order}.
The corresponding dynamic representation is called \emph{ordered}.
\begin{example}
	On the left we see the matrix of the dynamic decision set $ \decsetfct $
	in the minimal representation given above.
	The first entry, at $ (\dynsubcl_1^1, \dynsubcl_1^1) $, e.g., is 
	$ \{ (\mathit{S},\top), (\mathit{R},\mathit{g}) \} $ since
	$ (\mathit{S}.\dynsubcl_1^1,\top) $ and $ (\mathit{R}.\dynsubcl_1^1,\mathit{g}.\dynsubcl_1^1) $
	are in $ \decsetfct $.
	Assume $ \{(\mathit{S},\top)\}<\{ (S,\top),(\mathit{R},g) \} $.
	When the permutation switching $ \dynsubcl_1^1 $ and $ \dynsubcl_1^2 $
	is applied, we get the right matrix, which is lexicographically smaller
	(the first entry is smaller).
	\begin{center}
	\begin{tabular}{c|cc}
			& $ \dynsubcl_1^1 $                                    & $ \dynsubcl_1^2 $           \\ \hline
			$ \dynsubcl_1^1 $ & $ \{ (\mathit{S},\top), (\mathit{R},\mathit{g}) \} $ & $ \{ (\mathit{S},\top) \} $ \\
			$ \dynsubcl_1^2 $ & $ \{ (\mathit{S},\top) \} $                          & $ \{ (\mathit{S},\top) \} $
	\end{tabular}
	$ \overset{\dynsubcl_1^1\leftrightarrow \dynsubcl_1^2}{\qquad\longleftarrow\!\longrightarrow\qquad} $
	\begin{tabular}{c|cc}
		& $ \dynsubcl_1^1 $                                    & $ \dynsubcl_1^2 $           \\ \hline
		$ \dynsubcl_1^1 $ & $ \{ (\mathit{S},\top) \} $ & $ \{ (\mathit{S},\top) \} $ \\
		$ \dynsubcl_1^2 $ & $ \{ (\mathit{S},\top) \} $                          & $ \{ (\mathit{S},\top), (\mathit{R},\mathit{g}) \} $
	\end{tabular}
	\end{center}
	Thus, the minimal representation from above 
	is transformed into a \emph{minimal and ordered} representation 
	$ (\dynsubclasses_{\mathit{ord}},\decsetfct_{\mathit{ord}}) $
	by the permutation $ \dynsubcl_1^1\leftrightarrow \dynsubcl_1^2 $.
\end{example}
\begin{theorem}\label{theo:CanonicalRepresentationUnique}
	For every symbolic decision set there is 
	exactly one minimal and ordered dynamic representation. 
	We call this the \emph{canonical} (dynamic) representation.
\end{theorem}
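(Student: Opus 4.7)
The plan is to establish existence and uniqueness separately, relying on Lemma~\ref{lem:minimalAlmostUnique} for the orbit structure of minimal representations.

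For existence, I would start from any dynamic representation of the given symbolic decision set---for instance, the naive one from Example~\ref{ex:Representation} that replaces each color by its own dynamic subclass of cardinality~$1$. I then iteratively merge any two subclasses $\dynsubcl_{i}^{j}, \dynsubcl_{i}^{k}$ with identical context $\decsetcon$; each such merge strictly decreases the total number of subclasses while preserving the represented symbolic decision set, so the procedure terminates in a minimal representation. Among the finitely many permutations of the subclass indices, I would then pick one that minimizes the resulting matrix under the chosen lexicographic order (such a minimum exists because the set of reachable matrices is finite and totally ordered). Applying this permutation yields a minimal and ordered representation, which proves existence.

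For uniqueness, suppose $\rep = (\dynsubclasses, \decsetfct)$ and $\rep' = (\dynsubclasses', \decsetfct')$ are both minimal and ordered representations of the same symbolic decision set. By Lemma~\ref{lem:minimalAlmostUnique}, they are related by a permutation $\sigma$ of subclass indices, so their associated matrices $M$ and $M'$ lie in a common orbit under the induced permutation action. Both being lex-minimum in this shared orbit, $M$ and $M'$ must coincide. The key observation that finishes the argument is that within a minimal representation, distinct dynamic subclasses produce distinct rows (equivalently, distinct columns) in the matrix, since the row and column at index $\dynsubcl_{i}^{j}$ together encode the context $\decsetcon(\dynsubcl_{i}^{j})$, and minimality forces all contexts to differ. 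This makes the permutation action on matrices free, so $\sigma$ is forced to be the identity on each $\{1,\dots,m_{i}\}$, yielding $\rep = \rep'$.

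The main obstacle will be making the ``row/column encodes context'' link rigorous when places and transitions have types built from several basic color classes, so that a matrix entry is indexed by tuples of dynamic subclasses rather than by single subclasses. Handling this cleanly requires a careful but routine projection argument showing that each individual $\decsetcon(\dynsubcl_{i}^{j})$ can be recovered from the row and column at index $\dynsubcl_{i}^{j}$, so that in a minimal representation distinct subclasses indeed give rise to distinct row/column profiles---after which the freeness of the action, and hence uniqueness, is immediate.
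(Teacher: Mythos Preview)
Your proposal is correct and follows essentially the same route as the paper: use Lemma~\ref{lem:minimalAlmostUnique} to obtain a permutation between two minimal ordered representations, argue that both being lex-minimum in the same orbit forces the matrices (equivalently the dynamic decision sets) to coincide, and then use minimality to conclude the permutation is the identity. The paper isolates the middle step as a separate lemma (if $\rep_2=\rho(\rep_1)$ and both are ordered then $\decsetfct_1=\decsetfct_2$), and for the final step works directly with contexts in $\decsetfct$ rather than via matrix rows/columns: from $\decsetfct_1=\decsetfct_2$ one gets $\decsetcon_1(\dynsubcl_i^j)=\decsetcon_1(\rho_i(\dynsubcl_i^j))$, and minimality then forces $\rho_i(\dynsubcl_i^j)=\dynsubcl_i^j$---this sidesteps the tuple-indexing obstacle you flagged, so you can drop the projection argument entirely.
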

The proof follows from Lemma~\ref{lem:minimalAlmostUnique} and the observation that
if two ordered dynamic representations can be transformed into each other by applying a permutation
of the dynamic subclasses, they must have the same dynamic decision set.

We can algorithmically order a minimal representation by calculating all symmetric representations
and finding the one with the lexicographically smallest matrix.
These are maximally $ |\symmetries_\SN| $ comparisons of dynamic representations.
So by first making a dynamic representation minimal, and then ordering it, we get the respective canonical representation.

\begin{corollary}\label{cor:CostOfCanon}
	We can construct the canonical representation of a given symbolic decision set in $ O(|\symmetries_\SN|) $.
\end{corollary}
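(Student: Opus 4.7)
The plan is to produce an explicit three-step algorithm and read off its cost directly. Starting from any decision set $\decset$ in the symbolic class, I would first construct the naive dynamic representation from Example \ref{ex:Representation}: one dynamic subclass of cardinality $1$ per color in each basic class $\basecl_i$, with every color in $\decset$ replaced by the corresponding subclass. This step depends only on $|\decset|$ and not on $|\symmetries_\SN|$, so it contributes nothing to the leading cost.

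Next, I would pass to a minimal representation by computing, for each $\dynsubcl_i^j$, its context $\decsetcon(\dynsubcl_i^j)$, and merging any two dynamic subclasses (within the same basic class) whose contexts coincide. A single pass over pairs within each basic class suffices, since merging two subclasses of equal context does not alter the contexts of the remaining ones. The cost is polynomial in the size of the dynamic decision set and independent of the symmetry group, so this step is again absorbed into lower-order terms.

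Finally, to order the minimal representation I would apply each symmetry $s\in\symmetries_\SN$ to obtain at most $|\symmetries_\SN|$ candidate minimal representations, encode each as the matrix described above, and pick the lexicographically smallest one. By Theorem \ref{theo:CanonicalRepresentationUnique} this pick is the canonical representation, and since matrix comparisons are bounded by the (input-sized) dynamic decision set, the enumeration contributes exactly $O(|\symmetries_\SN|)$ comparisons, which dominates the first two steps.

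The main obstacle will be justifying that enumerating images under $\symmetries_\SN$ really reaches every permutation of dynamic subclasses that yields a valid minimal representation of the same symbolic decision set; otherwise the enumeration could miss the true lexicographic minimum. This is exactly where Lemma \ref{lem:minimalAlmostUnique} is invoked: any two minimal representations of the same symbolic decision set differ only by a permutation of the dynamic subclasses, and such a permutation is induced by the action of some $s\in\symmetries_\SN$ on the underlying valid assignment. With this observation, correctness follows from Theorem \ref{theo:CanonicalRepresentationUnique} and the cost bound is immediate.
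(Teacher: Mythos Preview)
Your proposal is correct and follows essentially the same three-step outline the paper gives just before the corollary: build a dynamic representation, merge to obtain a minimal one, then find the lexicographically smallest matrix among at most $|\symmetries_\SN|$ candidates. The only cosmetic difference is that the paper enumerates the permutations of the dynamic subclasses directly (there are at most $\prod_i m_i!\le\prod_i|\basecl_i|!=|\symmetries_\SN|$ of them), whereas you enumerate $\symmetries_\SN$ and let each symmetry induce a permutation via a fixed valid assignment; both give the same $O(|\symmetries_\SN|)$ bound and the same appeal to Lemma~\ref{lem:minimalAlmostUnique} and Theorem~\ref{theo:CanonicalRepresentationUnique}.
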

\subsection{Relations between Canonical Representations}
Between decision sets, we have the two relations
$ \decset[\top\rangle\decset' $ and $ \decset[t.v\rangle\decset' $
with $ v\in\valuations(t)=\basecl_1^{t_1}\!\times\!\cdots\!\times\!\basecl_n^{t_n} $.
In canonical representations, we abstract from specific colors $ c\in\basecl_i $ and replace
them by dynamic subclasses $ \dynsubcl_i^j $ of variables.
However, in the process of $ \top $-resolution or transition firing,
two objects represented by the same dynamic subclass can act differently.
This means we \emph{instantiate} special objects in the classes
that are relevant in the $ \top $-resolution or transition firing.

For this, each dynamic subclass $ \dynsubcl_{i}^j $ in a canonical representation $ \rep $ is \emph{split}
into finitely many $ \dynsubcl_{i}^{j,k} $ of cardinality $|\dynsubcl_{i}^{j,k}|= 1 $ with $ k>0 $,
and a subclass~$ \dynsubcl_{i}^{j,0} $, containing the possibly remaining, non-instantiated, variables.
Then, a $ \top $ is resolved, or a transition is fired, 
with the dynamic subclasses $ \dynsubcl_{i}^{j,k} $ replacing
explicit data entries $ c\in\basecl_{i} $.
Finally, the canonical representation $ \rep' $ of the reached dynamic representation is found.
These relations are denoted by $ \rep[\top\rangle\rep' $
and $ \rep[t.\dynsubcl\rangle\rep' $,
where $ \dynsubcl $ is a tuple of instantiated $ \dynsubcl_i^{j,k} $.

Below we see an example that corresponds to the last two steps in \refEx{DecisionFirings}.
We calculated the second canonical representation in the last section.
It is reached from the first canonical representation by firing $ \mathit{inf}\!.\dynsubcl_1^{2,1} $.
In this process one (the only) element in $ \dynsubcl_1^2 $ is instantiated by a dynamic subclass $ \dynsubcl_1^{2,1} $
of cardinality $ 1 $. After the actual firing, the reached representation is made canonical.
Then, a $ \top $ is resolved. Here, $ \dynsubcl_1^1 $
is split into $ \dynsubcl_1^{1,1} $ and $ \dynsubcl_1^{1,2} $ with
$ | \dynsubcl_1^{1,1} |=| \dynsubcl_1^{1,2} | =1$. 
In the reached dynamic representation, no two subclasses have the same context, 
so it is already minimal. After ordering we get the canonical representation.
\begin{center}
	\begin{tikzpicture}[scale = 0.675, transform shape]\renewcommand{\xdis}{5cm}
	\tikzstyle{decisionset} = [rectangle,fill=lightgray,
	draw,align=center,minimum size=7mm]
	\tikzstyle{env} = [fill=white]
	\tikzstyle{sys} = [rounded corners]
	\node[decisionset,env] (RepPre)	{
		$ \left( \mathit{I}.\dynsubcl_1^2, \{ \mathit{inf}\!.\dynsubcl_1^2\} \right) $\\
		$ \left( \mathit{S}.\dynsubcl_1^1, \{ \mathit{inf}\!.\dynsubcl_1^1,\mathit{inf}\!.\dynsubcl_1^2 \} \right)  $\\
		$ \left( \mathit{S}.\dynsubcl_1^2, \{ \mathit{inf}\!.\dynsubcl_1^1,\mathit{inf}\!.\dynsubcl_1^2 \} \right)  $
	};
	\node at (RepPre) (dynsubRepPre) [align=right,xshift=-2.8cm,yshift=0mm]{$ |\dynsubcl_1^1|=2 $\\$ |\dynsubcl_1^2|=1 $};
	\node[decisionset,env] at (RepPre) [xshift=6cm]	(Rep)	{
		$ \left( \mathit{R}.\dynsubcl_1^1, \{ g.\dynsubcl_1^2\} \right) $\\
		$ \left( \mathit{S}.\dynsubcl_1^1, \top \right)  $\\
		$ \left( \mathit{S}.\dynsubcl_1^2, \top \right)  $
	};
	\node at (Rep) (dynsubRep) [align=right,xshift=-2.1cm,yshift=-4mm]{$ |\dynsubcl_1^1|=2 $\\$ |\dynsubcl_1^2|=1 $};
	\node[decisionset,env] at (Rep) [xshift=6cm]	(RepPost)	{
		$ \left( \mathit{R}.\dynsubcl_1^3, \{ g.\dynsubcl_1^3\} \right) $\\
		$ \left( \mathit{S}.\dynsubcl_1^1, \{a.(\dynsubcl_1^1,\dynsubcl_1^3)\} \right)  $\\
		$ \left( \mathit{S}.\dynsubcl_1^2, \{a.(\dynsubcl_1^2,\dynsubcl_1^3)\} \right)  $\\
		$ \left( \mathit{S}.\dynsubcl_1^3, \{a.(\dynsubcl_1^3,\dynsubcl_1^3)\} \right)  $
	};
\node at (RepPost) (dynsubRep) [align=right,xshift=-2.9cm,yshift=-5mm]{$ \forall j\in\{1,2,3\}: $\\$ |\dynsubcl_1^j|=2  $};
	\path[->]
	([yshift=3mm]RepPre.east) edge node[above,yshift=-0.5mm]{$ \mathit{inf}.(\dynsubcl_1^{2,1}) $} ([yshift=3mm]Rep.west)
	([yshift=3mm]Rep.east) edge node[above]{$ \top $} ([yshift=3mm]RepPost.west)
	;
	\end{tikzpicture}
\end{center}

\begin{theorem}\label{theo:relBetweenCanonReps}
	Every relation $ \decset [t.v\rangle \decset' $ or 
	$ \decset [\top \rangle \decset' $ between two decision sets
	$ \decset $ and $ \decset' $ is represented by exactly one symbolic relation 
	$ \rep [t.\dynsubcl\rangle \rep' $ or 
	$ \rep [\top \rangle \rep' $ between the 
	respective canonical representations
	$ \rep $ and $ \rep' $.
\end{theorem}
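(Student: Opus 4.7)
The plan is to establish existence and uniqueness separately, leveraging Theorem~\ref{theo:CanonicalRepresentationUnique}.

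For existence, given a concrete step $\decset[t.v\rangle\decset'$, let $\rep=(\dynsubclasses,\decsetfct)$ be the canonical representation of $[\decset]$ and fix a valid assignment $\validassignment$ with $\decset=\validassignment^{-1}(\decsetfct)$. The mode $v\in\valuations(t)=\basecl_1^{t_1}\times\cdots\times\basecl_n^{t_n}$ is a tuple of concrete colors, and each color $c$ appearing in $v$ is placed by $\validassignment$ into some dynamic subclass $\dynsubcl_i^j$ of $\rep$. I would split every such $\dynsubcl_i^j$ along $v$, turning the relevant $c$ into a singleton $\dynsubcl_i^{j,k}$ and keeping the rest in $\dynsubcl_i^{j,0}$, and then collect the instantiated singletons into a tuple $\dynsubcl$. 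Firing $t$ symbolically on this refined representation yields some intermediate $\tilde\rep'$; because the symbolic firing rule is designed to mirror the concrete firing in $\HLtoLL(\SN)$ and $\validassignment$ extends naturally across the split, $\decset'=\validassignment^{-1}(\tilde\rep')$, so $\tilde\rep'$ represents $[\decset']$. Canonicalising $\tilde\rep'$ then produces the canonical representation $\rep'$ of $[\decset']$, and the resulting step $\rep[t.\dynsubcl\rangle\rep'$ represents $\decset[t.v\rangle\decset'$. The case $\decset[\top\rangle\decset'$ is analogous but simpler, since no mode needs to be instantiated.

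For uniqueness, suppose $\rep[t.\dynsubcl_1\rangle\rep_1'$ and $\rep[t.\dynsubcl_2\rangle\rep_2'$ both represent the same $\decset[t.v\rangle\decset'$. Since $\rep_1'$ and $\rep_2'$ are both canonical representations of $[\decset']$, Theorem~\ref{theo:CanonicalRepresentationUnique} forces $\rep_1'=\rep_2'$. For the tuple, both $\dynsubcl_1$ and $\dynsubcl_2$ must instantiate exactly those colors of $v$ that are not already singletons in $\rep$, and the choice of which subclass each such color is split out of is dictated by the structure of $\rep$ together with $v$. Two valid assignments $\validassignment,\validassignment'$ with $\decset=\validassignment^{-1}(\decsetfct)=\validassignment'^{-1}(\decsetfct)$ differ only by a permutation that fixes each $\dynsubcl_i^j$ setwise, so the induced splittings agree up to the naming convention chosen for the new singletons.

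The hard part will be pinning down this naming ambiguity rigorously: valid assignments are not unique, so one must argue that the identity of each singleton $\dynsubcl_i^{j,k}$ is well-defined. I would handle this by fixing a splitting convention once and for all -- for instance, indexing the new singletons inside $\dynsubcl_i^j$ by the coordinate position of their colors in $v$ -- so that the tuple $\dynsubcl$ is uniquely determined by $v$ and $\rep$, independent of the chosen $\validassignment$. With this convention in place, the existence construction and the uniqueness argument combine to yield the claim for both relations $[t.v\rangle$ and $[\top\rangle$.
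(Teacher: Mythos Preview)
Your existence argument is essentially identical to the paper's: pick a valid assignment $\validassignment_\decset$ for $\rep$, push $v$ through it, split the affected subclasses into singletons, fire symbolically, and canonicalise; the $\top$-case is handled analogously.

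For uniqueness the paper is considerably terser than you are: it simply asserts that two distinct symbolic modes $\dynsubcl\neq\dynsubcl'$ both representing the same $\decset[t.v\rangle\decset'$ would ``contradict the fact that $\rep$ is ordered.'' Your route---observing that any two valid assignments realising $\decset$ from the canonical $\rep$ differ only by a permutation fixing each $\dynsubcl_i^j$ setwise, and then pinning down the singleton indices by a fixed convention---is really an unpacking of the same point. The reason your setwise-fixing claim holds is precisely the argument in the proof of Theorem~\ref{theo:CanonicalRepresentationUnique}: any permutation of subclasses preserving $\decsetfct$ in a minimal ordered representation must be the identity, so $\validassignment(c)$ lands in a uniquely determined $\dynsubcl_i^j$. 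Your proposed indexing convention (by coordinate position in $v$) matches what the paper builds into its formal definition of symbolic modes $[\instancesubcl,\instanceelement]$ in the appendix, where $\instanceelement_i(x)$ is constrained by the parameter position. So your ``hard part'' is exactly the detail the paper absorbs into its definitions and then dispatches with the one-line appeal to orderedness; you are not missing anything, just being more explicit about where the work is hidden.
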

The proof for the case $ \decset [t.v\rangle \decset' $
follows by applying a valid assignment $ \validassignment_{\decset} $
to~$ v $ and splitting $ \rep $ correspondingly. 
The case $ \decset [\top\rangle \decset' $ works analogously.
\subsection{Properties of Canonical Representations}\label{sec:RepProps}
The goal is to use canonical representations instead of 
individual decision sets or (arbitrary representatives of) symbolic decision sets as nodes in a two player game.
The edges $ (\rep,\rep') $ in this game are built from relations 
$ \rep [t.\dynsubcl\rangle \rep' $ and $ \rep [\top\rangle \rep' $,
depending on the properties of $ \rep $.
For example, if $ \rep $ describes nondeterministic situations
in the Petri game, then the edges from $ \rep $ are built in such a way that
player~0 (representing the system) cannot win the game from there. 
In this section, we define the relevant properties of canonical representations.

In \cite{DBLP:journals/iandc/FinkbeinerO17}, the following properties of a decision set $ \decset $ are defined.
Let $ \marking(\decset)=\{ \place\in\places \with (\place,\top)\in\decset \lor \exists \comset\subseteq\transitions : (\place,\comset)\in\decset \} $ be the \emph{underlying marking} of~$ \decset $.
\(\decset\) is \emph{en\-vi\-ron\-ment-de\-pen\-dent} iff
$ \neg \decset[\top\rangle $, i.e.,
there is no $ \top $ symbol in any tuple in~$ \decset $,
and $ \forall \transition\in\transitions : \decset[\transition\rangle \Rightarrow \preset{\transition}\cap\envplaces\neq\emptyset $,
i.e., all enabled transitions have an environment place in their preset.
\(\decset\) \emph{contains a bad place} iff $ \badplaces\cap\marking(\decset)\neq\emptyset $.
\(\decset\)~is a \emph{deadlock} iff 
$ \neg \decset[\top\rangle $,
and $\exists \transition' \in \transitions : \marking(\decset)[\transition'\rangle \land  \forall \transition \in \transitions: \neg\decset[\transition\rangle $, i.e.,
there is a transition that is enabled in the underlying marking, but the system forbids all enabled transitions.
\(\decset\) is \emph{terminating} iff \(\forall \transition\in\transitions:\neg \marking(\decset)[\transition \rangle\).
\(\decset\) is \emph{nondeterministic} iff 
\(\exists \transition_1, \transition_2 : \transition_1\neq \transition_2
\land \sysplaces\cap\preset{\transition_1}\cap\preset{\transition_2}\neq \emptyset \land 
\decset[\transition_1\rangle \land \decset[\transition_2\rangle  \),
i.e., two separate transitions sharing a system place in their presets
both are~enabled.

In \cite{DBLP:journals/acta/GiesekingOW20}, 
we showed that all
decision sets in one equivalence class
share the same of the properties defined above.
Thus, we say a symbolic decision set has one of the above properties
iff its individual members have the respective property.

We now define these properties for canonical representations.
Since we do not want to consider individual decision sets, 
we do that on the level of dynamic representations.
Let $ \rep=(\dynsubclasses,\decsetfct) $ be a canonical representation.
$ \rep $ is \emph{en\-vi\-ron\-ment-de\-pen\-dent} iff
$ \neg\rep[\top\rangle $, i.e.,
there is no $ \top $ symbol in any tuple in $ \decsetfct $,
and $ \forall t.\dynsubcl : \rep[t.\dynsubcl\rangle \Rightarrow 
\exists p\in\hlenvplaces: (p,t)\in\hlflowfunc $, and
$ \rep $ \emph{contains a bad place} iff 
$ \exists p\in\hlbadplaces\,\exists X: 
(p.X,\top)\in\rep \lor\exists \comset : (p.X,\comset)\in\decsetfct  $.
Both these properties are rather analogous to the respective property of decision sets.
For termination and deadlocks, we introduce 
for the given $ \rep $ 
the representation $ \rep_\mathit{all} $
with the same dynamic subclasses, and the dynamic decision set 
where every player has all possible transitions $ t.\dynsubcl $
in their commitment set.
Since then all transitions that could fire in the underlying marking are enabled,
this substitutes for $ \marking(\decset) $. 
We say $ \rep $ is a \emph{deadlock} iff 
$ \neg\rep[\top\rangle $,
and $\exists t'.\dynsubcl' : \rep_\mathit{all}[t'.\dynsubcl'\rangle \land  \forall t.\dynsubcl : \rep[t.\dynsubcl\rangle $.
Analogously, $ \rep $ is \emph{terminating} 
iff \(\forall t.\dynsubcl :\neg \rep_\mathit{all}[t.\dynsubcl \rangle\).
For nondeterminism we have to consider two cases. 
The first one is analogous to the property for individual decision sets.
$ \mathit{ndet}_1(\rep)= \exists t.\dynsubcl, t'.\dynsubcl' : t.\dynsubcl\neq t'.\dynsubcl'
\land \exists p\in\hlsysplaces\,\exists X,\comset: (p.X, \comset)\in\decsetfct\land t.\dynsubcl, t'.\dynsubcl'\in\comset \land 
\rep[t,\dynsubcl\rangle \land \rep[t'.\dynsubcl'\rangle$.
The second case considers the situation that \emph{two instances} 
of one $ t.\dynsubcl $ can both fire with a shared system place in their preset.
$ \mathit{ndet}_2(\rep)=\exists t.\dynsubcl\, \exists p\in\hlsysplaces\,\exists X,\comset: (p.X, \comset) \in\decsetfct\land t.\dynsubcl\in\comset \land \exists \dynsubcl_i^{j,k}\in\dynsubcl : |\dynsubcl_i^{j}|>1 \land 
\rep[t,\dynsubcl\rangle  $.
Finally, $ \rep $ is \emph{nondeterministic} iff $ \mathit{ndet}_1(\rep)\lor \mathit{ndet}_2(\rep)$.

\begin{corollary}\label{cor:PropsOfSDAndCR}
The properties of a symbolic decision set and 
its canonical representation coincide.	
\end{corollary}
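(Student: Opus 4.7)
The plan is to prove the corollary property-by-property, fixing a symbolic decision set with canonical representation $\rep=(\dynsubclasses,\decsetfct)$ and any representative $\decset$ with valid assignment $\validassignment_\decset$ so that $\decset = \validassignment_\decset^{-1}(\decsetfct)$. Since the prior result of \cite{DBLP:journals/acta/GiesekingOW20} ensures that all members of the equivalence class share the listed properties, it suffices to show $P(\rep) \iff P(\decset)$ for this single $\decset$, using Theorem~\ref{theo:relBetweenCanonReps} as the bridge between the relations on the two levels.

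For \emph{environment-dependence} and \emph{containing a bad place}, the correspondence is essentially immediate. Theorem~\ref{theo:relBetweenCanonReps} gives $\rep[\top\rangle \iff \decset[\top\rangle$ and a bijection between firings $\rep[t.\dynsubcl\rangle\rep'$ and $\decset[t.v\rangle\decset'$ via $\validassignment_\decset$. The side condition $\exists p\in\hlenvplaces\colon (p,t)\in\hlflowfunc$ is equivalent to $\preset{t.v}\cap\envplaces\neq\emptyset$ because the low-level preset of $t.v$ consists exactly of places $p.c$ with $(p,t)\in\hlflowfunc$, and the environment/bad partitioning is inherited place-wise from the high-level net. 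Similarly, $p\in\hlbadplaces$ occurring in some tuple of $\decsetfct$ translates under $\validassignment_\decset$ into a tuple of $\decset$ whose place is some $p.c\in\badplaces$, and vice versa.

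For \emph{deadlock} and \emph{termination}, the key observation is that $\rep_\mathit{all}$ plays the role of the underlying marking $\marking(\decset)$: since $\rep_\mathit{all}$ has maximal commitment sets everywhere, its enabled firings $\rep_\mathit{all}[t.\dynsubcl\rangle$ are in bijection (again via $\validassignment_\decset$) with the transitions enabled in $\marking(\decset)$, as both depend only on which places are occupied. Combining this with the $\top$-equivalence above and the firing bijection yields both equivalences directly from the respective definitions.

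The main obstacle is \emph{nondeterminism}, whose symbolic definition splits into $\mathit{ndet}_1(\rep)$ and $\mathit{ndet}_2(\rep)$ precisely to account for the fact that one $t.\dynsubcl$ can abstract several low-level transitions. The direction from $\mathit{ndet}_1(\rep)$ to nondeterminism of $\decset$ mirrors the earlier arguments: the two distinct symbolic firings lift via $\validassignment_\decset$ to distinct enabled transitions sharing a system place. For $\mathit{ndet}_2(\rep)$, observe that whenever the shared dynamic subclass $\dynsubcl_i^{j,k}$ comes from an original $\dynsubcl_i^j$ with $|\dynsubcl_i^j|>1$, two different colors in $\basecl_i$ are mapped to $\dynsubcl_i^j$ under $\validassignment_\decset$, giving two distinct valuations $v\neq v'$ whose low-level transitions $t.v$ and $t.v'$ are both enabled in $\decset$ and share a system place $p.c$ from the preset; hence $\decset$ is nondeterministic. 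Conversely, if $\decset$ is nondeterministic via $\transition_1\neq\transition_2$, then either their symbolic images under $\validassignment_\decset$ differ, which gives $\mathit{ndet}_1(\rep)$, or they coincide as a single $t.\dynsubcl$, in which case their low-level distinctness forces two colors to be assigned the same dynamic subclass, which must therefore have cardinality greater than $1$; this yields $\mathit{ndet}_2(\rep)$. The bookkeeping that tracks which dynamic subclass the shared system place stems from is the one technical point, and it is precisely what the split into $\mathit{ndet}_1$ and $\mathit{ndet}_2$ is designed to handle.
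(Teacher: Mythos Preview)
Your proposal is correct and follows the same approach as the paper: the paper's proof is the single sentence ``For the proof, \refTheo{relBetweenCanonReps} is applied to the properties of individual decision sets,'' and you have spelled out, property by property, exactly how that application goes, including the case split for nondeterminism that the definitions $\mathit{ndet}_1$/$\mathit{ndet}_2$ are designed for. The only minor imprecision is calling the correspondence supplied by \refTheo{relBetweenCanonReps} a ``bijection'': as stated, the theorem gives that every concrete relation is represented by a unique symbolic one, while the converse (every symbolic relation has a concrete witness via a valid assignment) comes from the construction of the symbolic firing and $\top$-resolution rather than from the theorem itself --- but this is exactly what you use, so the argument stands.
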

For the proof, \refTheo{relBetweenCanonReps} is applied
to the properties of individual decision sets. 

\hypertarget{target:DirStratGen}{}%
\section{Applying Canonical Representations}\label{sec:DirStratGen}
In this section, we define for a high-level Petri game \(\SG\) the two-player B\"uchi game~$ \STPGc $ with 
canonical representations~$ \rep $ of 
symbolic decision sets,
rather than arbitrary representative $ \overline{\decset}$ as in \cite{DBLP:journals/acta/GiesekingOW20}.
The edges between nodes
are directly implied by the relations 
$ \rep[t.\dynsubcl\rangle\rep' $ and $ \rep[\top\rangle\rep' $ 
between canonical representations.
This allows to
\emph{directly} generate a winning strategy
for the system players in $ \PG $ from a winning strategy for player~0
in $ \STPGc $ (cf.~\refFig{HLandLLDiagram}). 

\subsection{The Symbolic Two-Player Game}
We reduce a Petri game $ \PG $
with high-level representation $ \SG $
to a two-player B\"uchi game $ \STPGc $.
The goal is to directly create a strategy $ \PGStrat $
for the system players in $ \PG $
from a strategy $ \STPGStrat $ for player 0 in $ \STPGc $.
Recall that a Petri game strategy must be deadlock-free, deterministic, and 
satisfy the justified refusal condition.
Additionally, to be winning, it must not contain bad places.

The nodes in $ \STPGc $ are canonical representations of symbolic decision sets, 
equivalence classes of situations in the Petri game.
The properties of canonical representations defined in \refSection{RepProps}
characterize these situations.
These properties are used in the construction of the game.
As in \cite{DBLP:journals/iandc/FinkbeinerO17,DBLP:journals/acta/GiesekingOW20}, 
the environment in the game $ \STPGc $ only moves when 
the system players cannot continue alone.
Thus, they get informed 
of the environment's decisions in their next steps 
and the system can therefore be considered as completely informed.
Bad situations (nondeterminism, deadlocks, tokens on bad places)
result in player~0 not winning.
If player~0 can avoid these situations and always win the game,
this strategy can be translated into a winning strategy for the system players in the Petri~game.

The \emph{symbolic two-player B\"uchi game
with canonical representations}
$ \STPGc=( \sysnodes,\envnodes,\bgedges,\acceptingstates,\rep_0 ) $ 
for a high-level Petri game $ \SG $
has the following components.
The \emph{nodes} $ \bgnodes=\sysnodes\dot\cup\envnodes $
are all possible canonical representations $ \rep $ of 
symbolic decision sets in $ \SG $.
The partition into \emph{player~$ 0 $'s nodes} $ \sysnodes $ 
and \emph{player~$ 1 $'s nodes}~$ \envnodes $
is given by $ \envnodes=\{
\rep\with\rep $  is environment dependent$ \} $
and $ \sysnodes=\bgnodes\setminus\envnodes $.
The \emph{edges}~$ \bgedges $ are constructed as follows.
If $ \rep\in\bgnodes $ contains a bad place, 
is a deadlock, is terminating, 
or is nondeterministic, there is only
a self-loop originating from $ \rep $. 
If $ \rep\in\sysnodes $ 
then $ (\rep,\rep')\in\bgedges $ if either 
$ \rep[\top\rangle\rep' $, or,
if no $ \top $ can be resolved,
$ \rep[t.\dynsubcl \rangle \rep' $
with only system players participating in~$ t.\dynsubcl $.
If $ \rep\in\envnodes $, then 
$ (\rep,\rep')\in\bgedges $ for every $ \rep' $
such that
$ \rep[t.\dynsubcl \rangle \rep' $,
i.e., transitions involving environment players
can only fire if nothing else is possible.
The set~$ \acceptingstates $ of \emph{accepting nodes} 
contains all representations~$ \rep $ that are 
terminating or environment-dependent, 
but are not a deadlock, nondeterministic, or contain a bad place.
The \emph{initial state} $ \rep_0 $
is the canonical representation of the 
symbolic decision set containing $ \decset_0 $.

A function
$ \STPGStrat:{\bgnodes}^*\sysnodes\to\bgnodes  $
s.t.
$ \forall \rep_0'\cdots\rep_k'\in
{\bgnodes}^*\sysnodes: (\rep_k',f(\rep_0'\cdots\rep_k'))\in\bgedges $
is called a \emph{strategy} for player~0.
A strategy $ \STPGStrat $ is called \emph{winning} iff
every \emph{run}
$ \rho=\rep_0 \rep_1 \rep_2\cdots $
from $ \rep_0 $ in~$ \STPGc $ (i.e., $ \forall k:
(\rep_k,\rep_{k+1})\in\bgedges $)
that is \emph{consistent with} $ \STPGStrat $
(i.e., $ \rep_k\in\sysnodes
\Rightarrow \rep_{k+1}=
\STPGStrat(\rep_0\cdots\rep_k) $)
\emph{satisfies the B\"uchi condition
w.r.t.~$ \acceptingstates $} 
(i.e., $ \forall k\ \exists k'\geq k:
\rep_{k'}\in\acceptingstates $).

In the game~$ \STPG $ in \cite{DBLP:journals/acta/GiesekingOW20},
player~0 has a winning strategy if and only if the system players in $ \HLtoLL(\SG) $
have a winning strategy. 
As described above, it is built from the relations $ \overline{\decset}[t.c\rangle\decset' $
and $ \overline{\decset}[\top\rangle\decset' $ from representatives $ \overline{\decset} $
of symbolic decision sets.
The introduced game~$ \STPGc $ is built analogously, with the difference that 
the nodes are now canonical representations
instead of arbitrary representatives of symbolic decision sets,
and the edges are built from the relations $ \rep [t.\dynsubcl\rangle \rep' $ and 
$ \rep [\top \rangle \rep' $ (cf. \refTheo{relBetweenCanonReps} and \refCor{PropsOfSDAndCR}).
The two games are isomorphic, as depicted in \refFig{HLandLLDiagram}.
Thus, we get the following result.
\begin{theorem}\label{theo:StrategyEquivalence}
	Given a Petri game $ \PG $ 
	with one environment player, a bounded number of system players with a safety objective,
	and a high-level representation~$ \SG $,
	there is a winning strategy for the system players in 
	$ \PG $
	if and only if 
	there is a winning strategy for player~$ 0 $ in $ \STPGc $.
\end{theorem}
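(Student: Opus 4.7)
My plan is to prove the theorem by establishing that $\STPGc$ is isomorphic (as a B\"uchi game) to the symbolic two-player B\"uchi game $\STPG$ from \cite{DBLP:journals/acta/GiesekingOW20} and then invoking the equivalence result proved there. Since both games have the set of symbolic decision sets as their underlying state space (just indexed by different ``labels'' -- an arbitrary representative $\overline{\decset}$ in $\STPG$ versus the canonical representation $\rep$ in $\STPGc$), the isomorphism will follow structurally from the results already assembled in \refSection{CanonRepsOfSymbDecsets}.

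First I would define the map $\Phi$ that sends each arbitrary representative $\overline{\decset}$ of a symbolic decision set to the unique canonical representation $\rep$ of the same symbolic decision set; Theorem~\ref{theo:CanonicalRepresentationUnique} guarantees that $\Phi$ is well-defined and a bijection on the state spaces of the two games. Next I would check that $\Phi$ respects the partition into player~$0$'s and player~$1$'s nodes and preserves the accepting set: this is precisely \refCor{PropsOfSDAndCR}, since environment-dependence, containment of bad places, deadlock, termination, and nondeterminism have been shown to coincide between a symbolic decision set and its canonical representation. Then I would verify that $\Phi$ preserves the edge relation. By construction, the edges in $\STPG$ and $\STPGc$ are built from the same case distinction on the above properties, and whenever the property forces a self-loop both games produce one. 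In the remaining cases the outgoing edges are determined by the relations $\overline{\decset}[t.c\rangle \decset'$ or $\overline{\decset}[\top\rangle\decset'$ in $\STPG$, respectively $\rep[t.\dynsubcl\rangle \rep'$ or $\rep[\top\rangle\rep'$ in $\STPGc$; \refTheo{relBetweenCanonReps} says that these two kinds of relations are in bijection when passed through $\Phi$, so the edge sets match. Finally, since $\rep_0 = \Phi(\overline{\decset_0})$ by definition, the initial states correspond.

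Given this B\"uchi-game isomorphism, every strategy $\STPGStrat$ for player~$0$ in $\STPGc$ translates (via pre- and post-composition with $\Phi$ and $\Phi^{-1}$ on histories) to a strategy $\STPGStrat^{\STPG}$ for player~$0$ in $\STPG$ with the same winning behavior -- the runs are mapped one-to-one, they visit corresponding accepting states, and so the B\"uchi condition is preserved in both directions. Combining this with the main theorem of \cite{DBLP:journals/acta/GiesekingOW20}, which states that the system players in $\HLtoLL(\SG)$ have a winning strategy iff player~$0$ has a winning strategy in $\STPG$, yields the claim.

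I expect the only nontrivial obstacle to be the edge-preservation step, since $\STPG$ distinguishes transitions $t.c$ with explicit colors $c$ while $\STPGc$ uses instantiated dynamic subclasses $t.\dynsubcl$. The verification rests entirely on \refTheo{relBetweenCanonReps}, but one must be careful to line up the choice of representative: from a specific firing $\overline{\decset}[t.c\rangle\decset'$ one obtains the corresponding $\rep[t.\dynsubcl\rangle\rep'$ by choosing a valid assignment $\validassignment_{\overline{\decset}}$ and splitting the dynamic subclass into which $c$ is mapped, and conversely every $\rep[t.\dynsubcl\rangle\rep'$ is witnessed by some firing of an element of the symbolic class. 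Since \refTheo{relBetweenCanonReps} already guarantees this correspondence at the level of equivalence classes, the isomorphism follows without further combinatorial work.
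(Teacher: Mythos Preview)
Your proposal is correct and follows essentially the same approach as the paper: reduce to the equivalence result of \cite{DBLP:journals/acta/GiesekingOW20} by arguing that $\STPGc$ and $\STPG$ are isomorphic as B\"uchi games. The paper's proof merely asserts this isomorphism in one sentence, whereas you spell out how Theorem~\ref{theo:CanonicalRepresentationUnique}, \refCor{PropsOfSDAndCR}, and \refTheo{relBetweenCanonReps} combine to yield the node bijection, preservation of the player partition and accepting set, and edge correspondence respectively---which is exactly the intended use of those results.
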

The size of $ \STPGc $ is the same as of $ \STPG $ 
(exponential in the size of $ \PG $).
This means, using $ \STPGc $, the question
whether a winning strategy in $ \PG $ exists
can still be answered in single exponential time 
\cite{DBLP:journals/iandc/FinkbeinerO17}.
In $ \STPG $ we must, for a newly reached node $ \decset' $, test if 
it is equivalent to another, already existing, representative.
This means we check for all symmetries $ s\in\symmetries $
if $ s(\decset') $ is already a node in the game. 
In the best case, if we directly find the node, this is $ 1 $ comparison.
In the worst case, at step~$ i $ with currently $ |\bgnodes^i| $ nodes,
we must make $ |\symmetries_\SN||\bgnodes^i| $ comparisons 
(no symmetric node is in the game so far).
To get the canonical representation
of a reached node in $ \STPGc $, we must make less than $ {|\symmetries_\SN|} $ 
comparisons to order the dynamic representation
(cf. \refCor{CostOfCanon}), 
and then compare it to all existing nodes.
Thus, $ |\symmetries_\SN|+1 $ comparisons in the best case
vs. $ |\symmetries_\SN|+|\bgnodes^i| $ in the $ i $-th step in the worst case.
We further investigate experimentally on this in \refSection{ExperimentalResults}.

\subsection{Direct Strategy Generation}
The solving algorithm in \cite{DBLP:journals/acta/GiesekingOW20} 
builds a strategy 
in the Petri game $ \PG=\HLtoLL(\SG) $ 
from a strategy in $ \STPG $ by 
first generating a strategy in the low-level equivalent~$ \TPG $.
Constituting the canonical representations as 
nodes 
allows us to directly generate a winning strategy~$ \PGS $ for the system players in $ \PG $
from a winning strategy $ \STPGStrat $ 
for player~$ 0 $ in $ \STPGc $ (cf. \refFig{HLandLLDiagram}).

The key idea is the same as in \cite{DBLP:journals/iandc/FinkbeinerO17}.
The strategy $ \STPGStrat $ 
is interpreted 
as a tree~$ \strategyTree $
with labels in~$ \bgnodes $, and root~$ \treenode_0 $ 
labeled with~$ \rep_0 $.
The tree is traversed in breadth-first order,
while the strategy $ \PGS $ is extended
with every reached node.
To show that this procedure is correct,
we must show that the generated strategy $ \PGStrat $
is satisfying the conditions justified refusal,
determinism, and deadlock freedom.
Justified refusal is satisfied because of the delay of environment transitions.
Assuming nondeterminism or deadlocks in the generated strategy $ \sigma $ leads to 
the contradiction that there are respective decision sets in $ \strategyTree $.
Finally, $ \PGStrat $ is \emph{winning},
since $ \STPGStrat $ also does not reach
representations that contain a bad place.
For the detailed proof, cf. App.~\ref{sec:appendix}.

Initially, the strategy $ \PGS $
contains places corresponding to the 
initial marking~$ \marking_0 $ 
in the Petri game, i.e.,
places labeled with $ p.c $
for every $ p.c\in\marking_0 $, 
each with a token on them.
They constitute the initial marking $ \marking_0^\PGS $ of $ \PGStrat $.
Every node $ \treenode $ in $ \strategyTree $,
labeled with $ \rep $,
is now associated with a set $ \cuts_{\treenode} $ of \emph{cuts} --
reachable markings in the strategy/unfolding.
The set $ \cuts_{\treenode_0} $,
associated to the root $ \treenode_0 $, contains only
the cut $ \cut_0= \marking_0^\PGS  $, the 
initial marking described above.

\begin{figure}[!tb]
	\centering
	\input{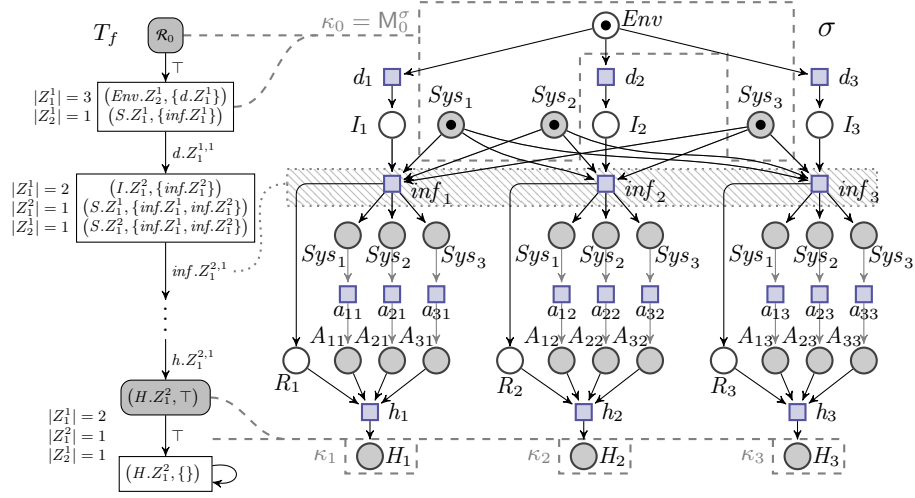}%
	\caption{\label{fig:FromFtoSigmaSmall}
		Parts of a winning strategy 
		for player~$ 0 $ in
		$ \STPGc $ (a tree with gray nodes for player~$ 0 $),
		and the generated strategy
		for the system players in $ \HLtoLL(\SG) $.
	}
\vspace*{-5mm}
\end{figure}

Every edge $ (\treenode,\treenode') $ in $ \strategyTree $ 
corresponds to 
either a
relation $ \rep[t.\dynsubcl\rangle \rep' $
or $ \rep[\top\rangle\rep' $.
Suppose now in the breadth-first traversal of $ \strategyTree $
we reach a node~$ \treenode $ with label~$ \rep $
and associated cuts~$ \cuts_\treenode $. 
Further suppose there is an edge in $ \strategyTree $
from $ \treenode $ to a node~$ \treenode' $ labeled with~$ \rep' $.
If the edge $ (\treenode,\treenode') $ in $ \strategyTree $
corresponds to a relation $ \rep[\top\rangle\rep' $ in $ \SG $, 
then the node $ \treenode' $ is associated
to the same set of cuts $ \cuts_{\treenode'}=\cuts_\treenode $
and nothing is added to the strategy.
If the edge $ (\treenode,\treenode') $ in $ \strategyTree $
corresponds to a relation $ \rep[
t.\dynsubcl \rangle\rep' $ 
in $ \SG $ and $ \rep\in\sysnodes $,
then there is, for every cut $ \cut \in \cuts_{\treenode} $, 
a transition~$ t.v $ corresponding to
$ t.\dynsubcl $
that can be fired from $ \cut $
(cf. \refTheo{relBetweenCanonReps}).
We add a transition labeled with $ t.v $
to the strategy,
with its preset in $ \cut $.
Furthermore, we add places corresponding to its postset to the strategy.
The cut $ \cut' $ that results from firing
the new transition from $ \cut $ is added to $ \cuts_{\treenode'} $.
If the edge $ (\treenode,\treenode') $ in $ \strategyTree $
corresponds to a relation $ \rep[
t.\dynsubcl \rangle\rep' $ 
in $ \SG $ and $ \rep\in\envnodes $, 
then we proceed exactly as in the last case,
but with the crucial difference
that instead of \emph{one} transition $ t.v $
fireable from a cut $ \cut \in\cuts_\treenode $, 
we consider \emph{all} such transition instances
and add them to the strategy.
In this step, the number of associated cuts can increase.

In \refFig{FromFtoSigmaSmall}, the strategy tree
$ \strategyTree $ (consisting of only one branch) in $ \STPGc $ and the generated strategy $ \PGStrat $ in 
(the unfolding of) $ \HLtoLL(\SG) $ for the running example $ \SG $ are depicted.
The strategy $ \PGStrat $ was already informally described in \refSection{Introduction}
and partly shown in \refFig{UnfAndStrat}.
The initial canonical representation $ \rep_0 $
is associated to the cut representing the initial marking in the Petri game.
The $ \top $-resolution does not change the associated cuts.
Firing $ \mathit{d}.\dynsubcl_1^{2,1} $ corresponds to the three firings 
of $ \mathit{d}_1 $, $ \mathit{d}_2 $, and $ \mathit{d}_3 $ in the strategy.
Thus, the third canonical representation is associated to the three cuts
$ \{ \mathit{Sys}_1,\mathit{Sys}_2,\mathit{Sys}_3, \mathit{I}_j \} $, $ j=1,2,3 $.
The strategy $ \strategyTree $ terminates in the canonical representation at the bottom,
which corresponds to the three situations where all computers connected to the correct host.

\section{Experimental Results}\label{sec:ExperimentalResults}
In this section we investigate the impact of using \emph{canonical representations} for solving the realizability problem
of distributed systems modeled with
high-level Petri games with one environment player, an arbitrary number of system players, a safety objective, and an underlying symmetric net.

A prototype \cite{DBLP:journals/acta/GiesekingOW20} for generating the reduced state space of $ \STPG $ for
such a high-level Petri game $ \SG $ 
shows a state space reduction by up to three orders of magnitude
compared to $ \mathcal{G}(\HLtoLL(\SG)) $ (cf.~\refFig{HLandLLDiagram}) for the considered benchmark families~\cite{DBLP:journals/acta/GiesekingOW20}.
For this paper we extended this prototype and implemented algorithms
to obtain the same state space reduction by using canonical representations in~/$ \STPGc $.
Furthermore, we implemented a solving algorithm to exploit the reduced state space for the realizability problem
of high-level Petri games.
As a reference, we implemented an explicit approach which does not exploit any symmetries of the system.
We applied our algorithms on the benchmark families presented in~\cite{DBLP:journals/acta/GiesekingOW20} and
added a benchmark family for the running example introduced in this paper.
An extract of the results for three of these benchmark families are given in Table~\ref{tab:results}. The complete 
results are in App.~\ref{app:table}.
\begin{table}[hbt]
	\vspace*{-5mm}
	\caption{
		Comparison of the run times of the \emph{canonical (Canon.)} and \emph{membership (Memb.)} approach solving
		the realizability problem (\cmark/\xmark) for the 3 benchmark families \emph{CS}, \emph{DW}, \emph{CM}
		with the number of states \(|V|\) and number of symmetries~\(|\symmetries|\).
		A gray number of states \(|\mathsf{V}|\) for the \emph{explicit reference approach} indicates a timeout.
		Results are obtained on an AMD Ryzen\texttrademark{} 7 3700X~CPU, 4.4~GHz, 64 GB RAM
		and a timeout (TO) of 30 minutes. The run times are in seconds.}
	\label{tab:results}
	\begin{minipage}{.5\textwidth}
		\scalebox{0.9}{
			\centering
			\begin{tabular}{c||r||c||rr||rr}
				\textit{CS} & \(|\mathsf{V}|\)  & \(\models\) & \(|V|\) & \(|\symmetries|\) & Memb. & Canon. \\\hline
				1 & 21 &  \cmark & 21 & 1 & .38 & \textbf{.36} \\
				2 & 639 &  \cmark & 326 & 2 & \textbf{.63} & .64 \\
				3 & 45042 &  \cmark & 7738 &  6 & \textbf{5.20} & 6.05 \\
				4 & \lightText{7.225e6} &  \cmark & 3.100e5 &  24 & 151.62 & \textbf{148.08} \\
				5 & \lightText{3.154e9} &   - & - & 120 & TO & TO \\
				\multicolumn{7}{c}{}\\
				DW & \(|\mathsf{V}|\)  & \(\models\) & \(|V|\) & \(|\symmetries|\) & Memb. & Canon. \\\hline
				1 & 57 & \cmark & 57 & 1 & .40 & \textbf{.39} \\
				2 & 457 & \cmark & 241 & 2 & .67 & \textbf{.62} \\
				& \multicolumn{1}{|c}{\(\cdots\)}&\multicolumn{1}{c}{\(\cdots\)}&\multicolumn{2}{c}{\(\cdots\)}&\multicolumn{2}{c}{\(\cdots\)}\\
				7 & \lightText{4.055e6} & \cmark & 5.793e5 & 7 & 100.67 & \textbf{75.24} \\
				8 & \lightText{2.097e7} & \cmark & 2.621e6 & 8 & 986.77 & \textbf{671.04} \\
				9 & \lightText{1.053e8} & - & - &  9 & TO & TO 
			\end{tabular}
		}
	\end{minipage}
	\begin{minipage}{.5\textwidth}
		\scalebox{0.9}{
			\centering
			\begin{tabular}{c||r||c||rr||rr}
				CM & \(|\mathsf{V}|\)  & \(\models\) & \(|V|\) & \(|\symmetries|\) & Memb. & Canon. \\\hline
				2/1 & 155 & \cmark & 79 & 2 & \textbf{.49} & .52 \\
				2/2 & 2883 & \xmark & 760 & 4 & \textbf{1.07} & 1.08 \\
				2/3 & 58501 & \xmark & 5548 &  12 & \textbf{4.38} & 5.94 \\
				2/4 & \lightText{1.437e6} & \xmark & 33250 & 48 & 15.12 & \textbf{14.40} \\
				2/5 & \lightText{3.419e7} & \xmark & 1.701e5 & 240 & 296.05 & \textbf{185.81} \\
				2/6 & \lightText{8.376e8} &  - & - & 1440 & TO & TO \\\cdashline{2-7}
				3/1 & 702 & \cmark & 147 & 6 & .71 & \textbf{.58} \\
				3/2 & 45071 & \cmark & 4048 & 12 & \textbf{4.46} & 4.99 \\
				3/3 & \lightText{3.431e6} & \xmark & 91817 & 36 & 89.35 & \textbf{49.90} \\
				3/4 & \lightText{2.622e8} & - & - & 144 & TO & TO \\\cdashline{2-7}
				4/1 & 2917 & \cmark & 239 & 24 & \textbf{1.24} & 1.42 \\
				4/2 & \lightText{6.587e5} & \cmark & 16012 & 48 & 25.42 & \textbf{14.09} \\
				4/3 & \lightText{1.546e8} & - & - & 144 & TO & TO \\
			\end{tabular}
		}
	\end{minipage}
\end{table}

The benchmark family \textit{Client/Server (CS)} corresponds to the running example of the paper.
With \textit{Document Workflow (DW)} a cyclic document workflow between clerks is modeled. In this 
benchmark family the symmetries of the systems are only one rotation per clerk.
In \textit{Concurrent Machines (CM)} a hostile environment can destroy one of the machines
processing the orders. Since each machine can only process one order, a positive realizability result
is only obtained when the number of orders is smaller than the number of machines.
In Table~\ref{tab:results} we can see that for those benchmark families
the extra effort of computing the canonical representations 
(Canon.)
is worthwhile for most instances
compared to the cost of checking the membership of a decision set in an equivalence class~(Memb.).
This is not the case for all benchmark families.
\begin{figure}[tb]
	\centering
	\includesvg[width=.8\textwidth]{./images/compareMembershipCanon}%
	\vspace*{-1mm}%
	\captionof{figure}{Comparing the percentage performance gain of the canonical and the membership approach with respect to
		the number of states and symmetries of the input problem for the benchmark families
		\emph{Package Delivery (PD)}, \emph{Alarm System (AS)}, \emph{CM}, \emph{DW}, \emph{DWs}, \emph{CS}.
		Labels are the parameters of the benchmark. A blue (unhatched) marker indicates a performance increase when using canonical representations.}
	\label{fig:canonVsMember}
	\vspace*{-2mm}
\end{figure}

In \refFig{canonVsMember} we have plotted the instances of all benchmark families
according to their number of symmetries and states.
The color of the marker shows the percentaged in- or decrease in performance
when using canonical representations while solving high-level Petri games.
Blue (unhatched) indicates a performance gain when using the canonical approach.
This shows that the benchmarks in general benefit from the canonical approach
for an increasing number of states (the right blue (unhatched) area).
However, the \emph{DWs} benchmark (a simplified version of \emph{DW}) exhibits the opposite behavior.
This is most likely explained by the very simple structure, which favors a quick member check.

The algorithms are integrated in \textsc{AdamSYNT}\footnote{\url{https://github.com/adamtool/adamsynt}}
\cite{DBLP:conf/cav/FinkbeinerGO15}, open source, 
and available online\footnote{\url{https://github.com/adamtool/highlevel}}. 
Additionally, we created an artifact with the current version running in a virtual machine
for reproducing and checking all experimental data with provided scripts\footnote{\url{https://doi.org/10.6084/m9.figshare.13697845}}.

\section{Related Work}\label{sec:RelatedWork}%
For the synthesis of distributed systems other approaches are most prominently the Pnueli/Ros\-ner model
\cite{DBLP:conf/focs/PnueliR90} and Zielonka's asynchronous automata~\cite{DBLP:books/ws/95/Zielonka95}.
The synchronous setting of Pnueli/Rosner is in general undecidable \cite{DBLP:conf/focs/PnueliR90},
but some interesting architectures exist that have a decision procedure with nonelementary complexity \cite{Rosner/92/Modular,DBLP:conf/lics/KupfermanV01,DBLP:conf/lics/FinkbeinerS05}.
For asynchronous automata, the decidability of the control problem is open in general,
but again there are several interesting cases which have a decision procedure with nonelementary complexity
\cite{DBLP:conf/icalp/GenestGMW13,DBLP:conf/fsttcs/MuschollW14,DBLP:conf/fsttcs/MadhusudanTY05}. 

Petri games based on P/T Petri nets are introduced in \cite{DBLP:journals/corr/FinkbeinerO14,DBLP:journals/iandc/FinkbeinerO17}.
Solving unbounded Petri games is in general undecidable.
However, for Petri games with one environment player, a bounded number of system players,
and a safety objective the problem is \textsc{exptime}-complete.
The same complexity result holds for interchanged players~\cite{DBLP:conf/fsttcs/FinkbeinerG17}.
High-level Petri games have been introduced in \cite{DBLP:journals/corr/abs-1904-05621}.
In \cite{DBLP:journals/acta/GiesekingOW20}, such Petri games are solved while exploiting symmetries. 

The symbolic B\"uchi game is inspired by the symbolic reachability graph for high-level nets from \cite{DBLP:journals/tcs/ChiolaDFH97},
and the calculation of canonical representatives~\cite{Chiola1991} from~\cite{DBLP:journals/tc/ChiolaDFH93}. 
There are several works on how to obtain symmetries of different subclasses of high-level Petri nets efficiently
\cite{DBLP:conf/apn/DutheilletH89,%
	Chiola1991,%
	DBLP:journals/tc/ChiolaDFH93,%
	DBLP:conf/apn/Lindquist91%
}
and for efficiency improvements for systems with different degrees of symmetrical behavior
\cite{DBLP:conf/apn/HaddadITZ95,%
	BAARIR2004219,%
	DBLP:conf/mascots/BellettiniC04%
}.

\section{Conclusions and Outlook}\label{sec:Conclusions}
We presented a new construction for the synthesis of distributed systems modeled by
high-level Petri games with one environment player, an arbitrary number of system players,
and a safety objective.
The main idea is the reduction to a 
symbolic two-player B\"uchi game,
in which the nodes are equivalence classes 
of symmetric situations in the Petri game.
This leads to a significant reduction of the state space.
The novelty of this construction is to obtain the reduction
by introducing canonical representations. 
To this end, a theoretically cheaper construction of the B\"uchi game
can be obtained depending on the input system.
Additionally, the representations now allow to skip 
the inflated generation of an explicit B\"uchi game strategy
and to directly generate a Petri game strategy 
from the symbolic B\"uchi game strategy.
Our implementation, applied on six structurally different benchmark families,
shows in general a performance gain in favor of the canonical representatives
for larger state spaces.

In future work, we plan to integrate the algorithms in AdamWEB \cite{DBLP:conf/tacas/Gieseking21}, 
a web interface\footnote{\url{http://adam.informatik.uni-oldenburg.de/}} for the synthesis of distributed systems,
to allow for an easy insight in the symbolic games and strategies.
Furthermore, we want to continue our investigation on the benefits 
of canonical representations, e.g., to 
directly generate high-level representations of Petri game strategies that match the 
given high-level Petri~game.
\bibliographystyle{splncs04}
\bibliography{ms}

\appendix
\section*{Appendix}
\section{Proofs}
\label{sec:appendix}
In this appendix, we give the proofs of the lemmas and theorems in the paper.
In the main body
we omitted the partition of basic color classes into so called \emph{static subclasses}
$ \basecl_{i}=\bigcup_{i=1}^{n_i}\basecl_{i,q} $
in a symmetric net $ \SN $.
The proofs will take this more general case into account.

The symmetries are restricted because they must respect the partition,
i.e., $ \forall s\in\symmetries_\SN : s_i(\basecl_{i,q})=\basecl_{i,q}$.
A representation $ \rep=(\dynsubclasses,\decsetfct) $
of a symbolic decision set
is equipped with a function $ \assdynsubcl $ that assigns each dynamic subclass
$ \dynsubcl_{i}^j\in\dynsubclasses $ a static subclass $ \basecl_{i,q} $,
i.e., $ \assdynsubcl(\dynsubcl_{i}^j)=q\in\{1,\dots,n_i \} $.
The function satisfies $ j< k\Rightarrow \assdynsubcl(\dynsubcl_{i}^j)< \assdynsubcl(\dynsubcl_{i}^k)$
(i.e., the dynamic subclasses are grouped by assigned static subclass).
With the condition 
$\forall i,q: \sum_{\assdynsubcl(\dynsubcl_{i}^j)=q}|\dynsubcl_{i}^j|=|\basecl_{i,q}| $,
each valid assignment of values to the variables in the dynamic subclasses 
gives a partition of the static subclasses~$ \basecl_{i,q} $.
A valid assignment is a function 
$ \validassignment:\bigcup_{i=1}^n \basecl_{i}\to \dynsubclasses $ 
that maps colors $ c\in\basecl_i $ to dynamic subclasses $ \dynsubcl_{i}^j $ and is 
consistent with the static subclasses (function $ \assdynsubcl $),
and cardinality of colors and dynamic subclasses. 

A permutation of the dynamic subclasses is 
a tuple $ \rho=(\rho_1,\dots,\rho_n) $
of permutations on $ \{ \dynsubcl_{i}^j\with 1\leq j\leq n_i \} $
that respects the function $ \assdynsubcl $,
i.e., $ \assdynsubcl(\dynsubcl_i^j)=\assdynsubcl(\rho_i(\dynsubcl_i^j)) $.
We apply a permutation to a representation $ \rep=(\dynsubclasses,\decsetfct) $
by keeping~$ \dynsubclasses $ and replacing every occurrence of $ \dynsubcl_i^j $
in $ \decsetfct $ by $ \rho_i(\dynsubcl_i^j) $,
and changing the cardinality $ |\dynsubcl_i^j| $ accordingly.
The function $ \assdynsubcl $ does not change.
We denote the new representation by $ \rho(\rep) $.

\refLemma{minimalAlmostUnique} states that minimal representations are unique up to a permutation.
In the general case, minimal representations contain no two subclasses 
$ \dynsubcl_i^j,\dynsubcl_i^k $
such that
$ \assdynsubcl(\dynsubcl_i^j)= \assdynsubcl(\dynsubcl_i^j)
$ and $ \decsetcon(\dynsubcl_i^j)= \decsetcon(\dynsubcl_i^k) $
(such dynamic subclasses can be merged).

\begin{proof}[\refLemma{minimalAlmostUnique}]
	It follows directly from the definition that 
	if $ \rep $ is minimal, so is $ \rho(\rep) $ for any permutation $ \rho $.
	Let now $ \rep $ and $ \rep' $ be two minimal representations
	of the same symbolic decision set. 
	We show there is a permutation between $ \rep $ and $ \rep' $.
	There exist two corresponding representations 
	$ \rep_p $ and $ \rep_p' $,
	where all dynamic subclasses are split into 
	dynamic subclasses of cardinality $ 1 $.
	This means that $ \rep $ can be reached from 
	$ \rep_p $ by merging all subclasses 
	$ \dynsubcl_i^{j_1},\dots,\dynsubcl_i^{j_N} $
	such that each pair 
	$ \dynsubcl_i^{j_k},\dynsubcl_i^{j_\ell} $
	satisfies 
	$ \assdynsubcl(\dynsubcl_i^{j_k})= \assdynsubcl(\dynsubcl_i^{j_\ell})
	\land \decsetcon(\dynsubcl_i^{j_k})= \decsetcon(\dynsubcl_i^{j_\ell}) $,
	and analogously for $ \rep' $ and $ \rep_p' $.
	All representations of a fixed symbolic decision set
	that only contain dynamic subclasses of cardinality~$ 1 $
	can trivially transformed into each other by a permutation.
	Furthermore, we can pick a permutation $ \rho $
	from $ \rep_p $ to $ \rep_p' $ such that 
	if $ \dynsubcl_i^{j_k} $ and $ \dynsubcl_i^{j_\ell} $
	are merged in $ \rep_p $, then $ \rho(\dynsubcl_i^{j_k}) $
	and $ \rho(\dynsubcl_i^{j_\ell}) $ are merged in $ \rep_p' $.
	This implies a corresponding permutation between 
	$ \rep $ and $ \rep' $.
\end{proof}
Now that we proved that minimal representations
are unique up to a permutation, the next task is to prove that ordering
a minimal representation makes it unique, 
and we can therefore speak of a canonical representation.
This is formally stated in \refTheo{CanonicalRepresentationUnique}.
To prove the theorem, we introduce \refLemma{symmetricOrderedRepsSameDynDecset}.
It states that if two representations are ordered
and equivalent (i.e., can be reached from another via a permutation),
then they have the same symbolic decision set~$ \decsetfct $.

\begin{lemma}\label{lem:symmetricOrderedRepsSameDynDecset}
	Let $ \rep_1=(\dynsubclasses_1,\decsetfct_1) $ and $ \rep_2=(\dynsubclasses_2,\decsetfct_2) $ 
	be two ordered representations of the 
	same symbolic decision set, and 
	$ \rho $ be a permutation 
	such that $ \rep_2=\rho(\rep_1) $. Then
	$ \decsetfct_1=\decsetfct_2 $.
\end{lemma}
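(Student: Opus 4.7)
The plan is to reduce the claim to the observation that two ordered representations lying in the same orbit under permutations of dynamic subclasses must yield the same lex-minimum matrix, and then to use that the matrix uniquely encodes the dynamic decision set.

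First, I would establish that the matrix representation introduced before Theorem~\ref{theo:CanonicalRepresentationUnique} is a faithful encoding of a dynamic decision set on a fixed indexed family $\dynsubclasses$: from the entry at $(\dynsubcl, \dynsubcl')$ one can recover every tuple $(p.\dynsubcl, \comset) \in \decsetfct$ with $t.\dynsubcl' \in \comset$, while the row of $\dynsubcl$, read in any column, recovers the $(p.\dynsubcl, \top)$ and $(p.\dynsubcl, \emptyset)$ tuples, since the second coordinate is neglected there. Since $\rep_2 = \rho(\rep_1)$ and $\rho$ respects $\assdynsubcl$ as well as all cardinalities $|\dynsubcl_i^j|$, the families $\dynsubclasses_1$ and $\dynsubclasses_2$ coincide as indexed families, so it suffices to show that the associated matrices $M_1$ and $M_2$ are equal entry-by-entry.

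Second, I would exploit the group structure of permutations. The permutations of dynamic subclasses respecting $\assdynsubcl$ and cardinalities form a group acting on representations, and since $\rep_2 = \rho(\rep_1)$, the orbit of $\rep_1$ equals the orbit of $\rep_2$: any $\rho'(\rep_2)$ equals $(\rho' \circ \rho)(\rep_1)$, and conversely any $\rho''(\rep_1) = (\rho'' \circ \rho^{-1})(\rep_2)$. Because the matrix construction commutes with the permutation action (applying $\rho'$ to $\rep$ simultaneously relabels rows and columns of its matrix), the two orbits give rise to the same collection of matrices.

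Finally, by the definition of being ordered, $M_1$ is the lex-minimum matrix reachable from $\rep_1$ under permutations, and $M_2$ is the analogous minimum for $\rep_2$. Since the lex-order is total and the two sets of reachable matrices coincide, these minima agree, so $M_1 = M_2$ and hence $\decsetfct_1 = \decsetfct_2$. The main obstacle is notational rather than conceptual: one must carefully check that the matrix faithfully encodes $\decsetfct$ despite the asymmetric treatment of $\top$ and $\emptyset$ entries (whose column index is neglected), and that taking the matrix genuinely commutes with the permutation action on representations; once these bookkeeping points are settled, the lex-minimum orbit argument closes the proof immediately.
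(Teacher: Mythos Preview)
Your proposal is correct and follows essentially the same approach as the paper. The paper's proof is slightly more compact: it argues by contradiction that $\rho$ cannot strictly increase $\mathit{mat}_1$ (else $\rho^{-1}$ would strictly decrease $\mathit{mat}_2$, contradicting that $\rep_2$ is ordered) and cannot strictly decrease it (since $\rep_1$ is ordered), hence $\mathit{mat}_1=\mathit{mat}_2$; your orbit-minimum formulation is the same argument phrased structurally, and your care about the matrix faithfully encoding $\decsetfct$ matches the paper's closing remark that ``$\mathit{mat}_i$ is just another presentation of $\decsetfct_i$''.
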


\begin{proof}[Lemma \ref{lem:symmetricOrderedRepsSameDynDecset}]
	We denote the matrix of a representation $ \rep_i $ by $ \mathit{mat}_i $.
	Since $ \rep_1 $ is ordered, 
	$ \rho $ cannot transform $ \mathit{mat}_1 $
	into a (lexicographically) smaller matrix. 
	Aiming a contradiction, assume that $ \rho $ transforms $ \mathit{mat}_1 $ into a bigger matrix $ \rho(\mathit{mat}_1) $.
	Since $ \rho(\mathit{mat}_1)=\mathit{mat}_2 $,
	this would mean that $ \rho^{-1} $ transforms $ \mathit{mat}_2 $
	into a smaller matrix, implying that $ \rep_2 $ is not ordered. Contradiction.
	Therefore, $ \mathit{mat}_1=\rho(\mathit{mat}_1)=\mathit{mat}_2 $.
	Since $ \mathit{mat}_i $ is just another presentation of $ \decsetfct_i $,
	we follow $ \decsetfct_1=\decsetfct_2 $.
\end{proof}

Now we are equipped with all tools to prove
\refTheo{CanonicalRepresentationUnique}, which states 
that there is exactly one minimal and ordered representation
for each symbolic decision set.
\begin{proof}[\refTheo{CanonicalRepresentationUnique}]
	The statement follows
	from 
	\refLemma{minimalAlmostUnique} and 
	\refLemma{symmetricOrderedRepsSameDynDecset}.
	Let $ \rep_1=(\dynsubclasses_1,\decsetfct_1) $ 
	and $ \rep_2=(\dynsubclasses_2,\decsetfct_2) $
	be two minimal and ordered representations 
	of the same symbolic decision set.
	\refLemma{minimalAlmostUnique} gives us that there
	is a permutation~$ \rho $
	such that $ \rep_2=\rho(\rep_1) $.
	Then we have by
	\refLemma{symmetricOrderedRepsSameDynDecset}
	that $ \decsetfct_1=\decsetfct_2 $.
	This means for all 
	$ \dynsubcl_i^j $ that 
	$ \decsetcon_1(\dynsubcl_i^j)
	=\decsetcon_2(\rho_i(\dynsubcl_i^j)) $
	($ \decsetcon_i $ is the context in $ \decsetfct_i $).
	$ \rho $ is a permutation, so by definition
	$ \assdynsubcl(\rho_i(\dynsubcl_i^j))
	= \assdynsubcl(\dynsubcl_i^j) $.
	$ \rep_1 $ is minimal, 
	therefore it must hold that $ \rho_i(\dynsubcl_i^j)=\dynsubcl_i^j $,
	else we could merge the subclasses 
	$ \rho_i(\dynsubcl_i^j)$ and~$\dynsubcl_i^j $.
	This holds for all $ i,j $,
	such that we finally have 
	$ \forall i:\rho_i=\identity_{\{ \dynsubcl_{i}^j\with 1\leq j\leq m_i \}} $
	and therefore, $ \rep_1=\rho(\rep_1)=\rep_2 $.
\end{proof}
\refTheo{relBetweenCanonReps} states that each relation 
between two decision sets is represented by exactly one relation between the 
corresponding canonical representations of the symbolic decision sets.
\begin{proof}[\refTheo{relBetweenCanonReps}]
	Consider the valid assignment 
	$ \validassignment_{\decset}: \bigcup_{i=1}^n \basecl_i\to \dynsubclasses $
	with $ \rep=(\dynsubclasses,\decsetfct) $ and $ \validassignment_{\decsetfct}^{-1}(\decset)=\decset $.
	Applying $ \validassignment_{\decset} $ to a valuation $ v $ gives a 
	tuple of dynamic subclasses in $ \dynsubclasses $.
	Splitting the contained elements in this tuple into dynamic subclasses~$ \dynsubcl_i^{j,k} $ of cardinality $ 1 $
	with respect to the different colors in $ v $, gives a tuple~$ \dynsubcl $ of instantiated subclasses.
	This splitting is possible since $ \validassignment_\decset $ is valid.
	In the firing of $ t.\dynsubcl $, the dynamic subclasses replace colors, which means that
	the reached dynamic representation represents the symbolic decision set of $ \decset' $.
	Making it canonical therefore results in $ \rep' $. 
	Assuming that $ \decset [t.v\rangle \decset' $ is represented by two different 
	symbolic relations $ \rep [t.\dynsubcl\rangle \rep' $ and $ \rep [t.\dynsubcl'\rangle \rep' $
	contradicts the fact that $ \rep $ is ordered.
	The proof for the $ \top $-resolution follows analogously.
\end{proof}

\refTheo{StrategyEquivalence} states that the existence of a winning strategy 
for the system players in a game $ \PG=\HLtoLL(\SG) $
is equivalent to the existence of a winning strategy for player~0
in the two-player B\"uchi game $ \STPGc $.

\begin{proof}[\refTheo{StrategyEquivalence}]
	This theorem follows directly from the fact that
	the game $ \STPGc $ is built
	exactly as the game~$ \STPG $ in \cite{DBLP:journals/acta/GiesekingOW20},
	with the difference that the nodes are now canonical representations
	instead of arbitrary representatives of symbolic decision sets
	(cf. \refFig{HLandLLDiagram}).
\end{proof}
\refTheo{StrategyEquivalence} only states the equivalence of existence of strategies in the games.
We now prove that the algorithm for generating strategies is correct.
For the sake of completion we formally introduce Petri net unfoldings and 
strategies in Petri games \textbf{(\hyperlink{target:eins}{1})}. 
Then we give a more detailed version of the algorithm presented in \refSection{DirStratGen} \textbf{(\hyperlink{target:zweib}{2b})},
taking the general case (partition of basic color classes) into account.
For this, we first formally define the relations between canonical representations
that the game is built from \textbf{(\hyperlink{target:zweia}{2a})}.
Finally, we prove the correctness of the algorithm~\textbf{(\hyperlink{target:drei}{3})}.

\hypertarget{target:eins}{\textbf{(1)}} First, we must define concurrency and conflicts in Petri nets.
Consider a Petri net 
$ \PN=(\places,\transitions,\flowfunc,\marking_\mathsf{0}) $
and two nodes $ \mathsf{x},\mathsf{y}\in\places\cup\transitions $.
We write $ \mathsf{x}<\mathsf{y} $, and call $ \mathsf{x} $ \emph{causal predecessor} of $ \mathsf{y} $,
if $ \mathsf{x}\flowfunc^+\mathsf{y} $.  
Additionally, we write $ \mathsf{x}\leq \mathsf{y} $ if $ \mathsf{x}<\mathsf{y} $ or $ \mathsf{x}=\mathsf{y} $,
and call $ \mathsf{x} $ and $ \mathsf{y} $ \emph{causally related} if 
$ \mathsf{x}\leq \mathsf{y} $ or $ \mathsf{y}\leq \mathsf{x} $.
The nodes $ \mathsf{x} $ and $ \mathsf{y} $ are said to be \emph{in conflict},
denoted by $ \mathsf{x}\sharp \mathsf{y} $, if 
$ \exists \mathsf{p}\in\places\,\exists \mathsf{t_1}\neq \mathsf{t_2}\in\postset{p} :
\mathsf{t_1}\leq \mathsf{x} \land \mathsf{t_2}\leq \mathsf{y}$. 
Two nodes are called \emph{concurrent},
if they are neither in conflict, nor causally related.
A set $ \mathsf{X}\subseteq\places\cup\transitions $
is called concurrent, if each two elements in $ \mathsf{X} $ are concurrent.

$ \PN $ is called an \emph{occurrence net} if: 
i) every place has at most one transition in its preset, i.e.,
$ \forall \mathsf{p}\in\places:|\preset{\mathsf{p}}|\leq 1 $;
ii) no transition is in self conflict, i.e., 
$ \forall {\mathsf{t}\in\transitions:} \neg(\mathsf{t}\sharp \mathsf{t}) $;
iii) no node is its own causal predecessor, i.e.,
$ \forall \mathsf{x}\in\places\cup\transitions:{\neg (\mathsf{x}<\mathsf{x})} $;
iv) the initial marking contains exactly the places with no transition in their preset, i.e.,
$ \marking_\mathsf{0}=\{ \mathsf{p}\in\places\with \preset{\mathsf{p}}=\emptyset \} $.
An occurrence net is called a \emph{causal net} 
if additionally 
v) from every place there is at most one outgoing transition, i.e.,
$ \forall \mathsf{p}\in\places: |\postset{\mathsf{p}}|\leq 1 $.

For a superscripted net $ \PN^x $,
we implicitly also equip its components with 
the superscript, i.e., 
$ \PN^x=(\places^x,\transitions^x,\flowfunc^x,\marking_\mathsf{0}^x) $,
as well as the corresponding pre- and postset function 
$ \mathit{pre}^x,\mathit{post}^x $.
Let $ \PN $ and $ \PN' $ be two Petri nets.
A function $ h:\places\cup\transitions\to\places'\cup\transitions' $
is called a \emph{Petri net homomorphism}, if:
i) it maps places and transitions in $ \PN $ into the corresponding sets in $ \PN' $, i.e.,
$ \forall \mathsf{p}\in\places:{h(\mathsf{p})\in\places'}\land\forall {\mathsf{t}\in\transitions}: h(\mathsf{t})\in\transitions' $;
ii) it maps the pre- and postset correspondingly, i.e.,
$ \forall {\mathsf{t}\in\transitions}: \mathit{pre}'(h(\mathsf{t}))=h(\preset{\mathsf{t}}) 
\land \mathit{post}'(h(\mathsf{t}))=h(\postset{\mathsf{t}})$.
The homomorphism is called \emph{initial} if additionally
iii) it maps the initial marking of $ \PN $ to the initial marking of~$ \PN' $,
i.e., $ h(\marking_\mathsf{0})=\marking_\mathsf{0}' $.

A(n \emph{initial}) \emph{branching process}
$ \beta=(\PN^U,\lambda^U) $ of $ \PN $
consists of an occurrence net $ \PN^U $ and a(n initial) 
homomorphism $ \lambda^U:\places^U\cup\transitions^U
\to \places\cup\transitions $ that 
is injective on transitions with same preset, i.e.,
$ \forall \mathsf{\mathsf{t_1}},\mathsf{t_2}\in\transitions^U: 
(\preset[U]{\mathsf{t_1}}=\preset[U]{\mathsf{t_2}}\land \lambda^U(\mathsf{t_1})=\lambda^U(\mathsf{t_2}) )
\Rightarrow \mathsf{t_1}=\mathsf{t_2} $.
If $ \beta_R=(\PN^R,\rho) $ is an initial 
branching process of $ \PN $ with a causal net $ \PN^R $,
$ \beta_R $ is called an \emph{initial (concurrent) run of $ \PN $}.
A run formalizes a single concurrent execution of the net.
For two branching processes 
$ \beta=(\PN^U,\lambda^U) $ and $ \beta'=({\PN^V},{\lambda^V}) $
we call $ \beta $ a \emph{subprocess} of $ \beta' $, if
i) $ \PN^U $ is a \emph{subnet} of $ {\PN^V} $, i.e.,
$ \places^U\subseteq{\places^V},
\transitions^U\subseteq{\transitions^V},
\flowfunc^U\subseteq{\flowfunc^V},
\marking_\mathsf{0}^U={\marking_\mathsf{0}^V} $;
ii) $ {\lambda^V} $ acts on $ \places^U\cup\transitions^U $
as $ \lambda^U $ does, i.e.,
$ {\lambda^V}|_{\places^U\cup\transitions^U}=\lambda^U $.

A branching process $ \beta=(\PN^U,\lambda^U) $
is called an \emph{unfolding of $ \PN $}, if 
for every transition that can occur in the net,
there is a transition in the unfolding with corresponding label, i.e.,
$ \forall \mathsf{t}\in\transitions\,\forall X\subseteq\places^U:
X \text{ concurrent} \land \preset{\mathsf{t}}=\lambda^U(X)
\Rightarrow \exists \mathsf{t}^U\in\transitions^U:
\lambda^U(\mathsf{t}^U)=\mathsf{t} \land \preset[U]{\mathsf{t}^U}=X $.
The unfolding of a net is unique up to isomorphism.
The unfolding of a Petri game 
$ \PG=(\sysplaces,\envplaces,\transitions,\flowfunc,\marking_0,\badplaces) $ 
with underlying net $ \PN $ is the unfolding of $ \PN $,
where the distinction of system-, environment-, and bad places 
is lifted to the branching process:
$ \sysplaces^U=\{ \mathsf{p}\in\places^U\with \lambda^U(\mathsf{p})\in\sysplaces \} $,
$ \envplaces^U=\{ \mathsf{p}\in\places^U\with \lambda^U(\mathsf{p})\in\envplaces \} $,
and
$ \badplaces^U=\{ \mathsf{p}\in\places^U\with {\lambda^U(\mathsf{p})\in\badplaces} \} $.

A \emph{strategy} for the system players in $ \PG $
is a subprocess $ \PGStrat=(\PN^\PGStrat,\lambda^\PGStrat) $ 
of the unfolding of $ \PG $ satisfying the following conditions:\\
\emph{Justified refusal:}
If a transition is forbidden in the strategy,
then a system player in its preset uniformly forbids all occurrences in the strategy, i.e.,
$ \forall \mathsf{t}\in\transitions^U: ({\mathsf{t}\notin\transitions^\PGStrat} 
\land \preset[U]{t}\subseteq\places^\PGStrat)\Rightarrow
(\exists \mathsf{p}\in\preset[U]{\mathsf{t}}\cap\sysplaces^\PGStrat\,
\forall \mathsf{t'}\in\postset[U]{\mathsf{p}}: \lambda^U(\mathsf{t'})=\lambda^U(\mathsf{t})
\Rightarrow \mathsf{t'}\notin\transitions^\PGStrat)$.
This also implies that no pure environment transition is forbidden. \\
\emph{Determinism:}
In no reachable marking (cut) in the strategy does a 
system player allow two transitions in his postset 
that are both enabled, i.e.,
$ \forall \mathsf{p}\in\sysplaces^\PGStrat\ 
\forall \marking\in\mathcal{R}(\PN^\PGStrat):
\mathsf{p}\in\marking\Rightarrow \exists^{\leq 1} \mathsf{t}\in\postset[\PGStrat]{\mathsf{p}}:
\marking[\mathsf{t}\rangle $. The set $ \mathcal{R}(\PN^\PGStrat) $
are all reachable markings in the strategy.\\
\emph{Deadlock freedom:}
Whenever the system can proceed in $ \PG $, 
the strategy must also give the possibility to continue, i.e.,
$ \forall \marking\in\mathcal{R}(\PN^\PGStrat):
(\exists \mathsf{t}\in\transitions^U: \marking[\mathsf{t}\rangle ) 
\Rightarrow {\exists \mathsf{t'}\in\transitions^\PGStrat: \marking[\mathsf{t'}\rangle} $.

An initial concurrent run $ \pi=(\PN^R,\rho) $ of
the underlying net $ \PN $ of a Petri game~$ \PG $
is called a \emph{play} in~$ \PG $.
The play $ \pi $ \emph{conforms} to $ \PGStrat $
if it is a subprocess of $ \PGStrat $.
The system players win $ \pi $ if $ \badplaces^R=\emptyset $,
otherwise the environment players win $ \pi $.
The strategy $ \PGStrat $ is called \emph{winning},
if all plays that conform to $ \PGStrat $
are won by the system players.
This is equivalent to $ \badplaces^\PGStrat=\emptyset $.

\hypertarget{target:zweia}{\textbf{(2a)}} We give more in-depth details 
on the generation of a Petri game strategy $ \PGS $
from a B\"uchi game strategy $ \STPGStrat $.
Since the edges in $ \STPGc $ are built from relations $ \rep[\top\rangle\rep' $
and $ \rep[
t.\dynsubcl \rangle\rep' $ between canonical representations $ \rep,\rep' $,
we first have to explain these relations more formally
then in the main body of the paper.

With the canonical representations 
of symbolic decision sets, 
we must also adapt the relations between these objects,
namely the transition firing and the $ \top $-resolution.
For that, we now introduce 
symbolic versions of these relations.
In the \emph{symbolic transition firing},
we consider \emph{symbolic instances}
of a transition~$ t $, where
the parameters are assigned 
elements in
dynamic subclasses
instead of specific colors
in basic color classes.
For a representation $ \rep $
we group the dynamic subclasses by
$ \forall 1\leq i\leq n: 
\dynsubclasses_i=\{ \dynsubcl_i^j\with 1\leq j\leq m_i \} $.
For a transition $ t\in\hltransitions $
we then define $ \valuations_\rep(t)=\dynsubclasses_1^{t_1}\!\times\!\cdots\!\times\!\dynsubclasses_n^{t_n} $
(analogously to $ \valuations(t)=\basecl_1^{t_1}\!\times\!\cdots\!\times\!\basecl_n^{t_n} $).
The instance of the 
$ x $-th parameter 
of type $ \dynsubclasses_i $ 
in $ \valuations_\rep(t) $
can be specified by a pair 
$ (\instancesubcl_i(x),
\instanceelement_i(x))=(j,k) $,
meaning that the parameter represents
the \mbox{$ k $-th} (arbitrarily chosen)
element of~$ \dynsubcl_i^j $.
Since we maximally have 
$ |\dynsubcl_i^j| $
different instances of $ \dynsubcl_i^j $,
$ k < | \dynsubcl_i^j | $ must hold.
Furthermore, $ k $ must be
less than the number of parameters
instanced to $ \dynsubcl_i^j $.
A \emph{symbolic mode}
$ [\instancesubcl,\instanceelement] $ of~$ t $ 
is defined by two families of functions 
$\instancesubcl=
\{ \instancesubcl_i:
\{ 1,\dots,t_i \}\to\N^+
\} $
and 
$ \instanceelement=
\{ \instanceelement_i:
\{ 1,\dots,t_i \}\to\N^+
\} $
satisfying the conditions above.
We denote $ \dynsubcl^{[\instancesubcl,\instanceelement]}=
((\dynsubcl_i^{\instancesubcl_i(x),
	\instanceelement_i(x)})_{x=1}^{t_i})_{i=1}^n $,
and $ t.\dynsubcl^{[\instancesubcl,\instanceelement]} $
is called a \emph{symbolic instance of}~$ t $.

To fire a symbolic instance~$ 
t.\dynsubcl^{[\instancesubcl,\instanceelement]} $
from a representation~$ \rep=(\dynsubclasses,\decsetfct) $,
we have to \emph{split} the dynamic subclasses~$ \dynsubcl_i^j $
such that the 
(by $ \instancesubcl_i $ and $ \instanceelement_i $) 
instantiated elements are represented by
new subclasses~$ \dynsubcl_i^{j,k} $ of 
cardinality~$ 1 $.
The possibly remaining (non-instantiated) elements
are collected in the additional subclass~$ 
\dynsubcl_i^{j,0} $.
The new dynamic decision set
can be naturally derived from $
\decsetfct $: since the subclasses~$ \dynsubcl_i^j $ 
are split into $ \{\dynsubcl_i^{j,k}\}_k $, 
every tuple~$ (p.\dynsubcl,\comset)\in\decsetfct $ 
in which the original subclasses appeared
is replaced by all possible tuples containing the new 
subclasses instead. 
This \emph{split representation} is denoted by~$ \rep[ \instancesubcl,\instanceelement ] $.

A symbolic instance $ t.\dynsubcl^{[\instancesubcl,\instanceelement]} $
can be \emph{fired} from the representation
$ \rep[ \instancesubcl,\instanceelement ]=:\rep_s=(\dynsubclasses_s,\decsetfct_s) $
analogously to the ordinary case $ \decset[t.c\rangle $ 
with the dynamic subclasses $ \dynsubcl_{i}^{j,k} $ replacing 
the explicit colors $ c\in\basecl_i $ in guards and arc expressions.
This leads to a new representation $ \rep_s' $ with same 
dynamic subclasses $ \dynsubclasses_s'=\dynsubclasses_s $
and the dynamic decision set $ \decsetfct_s' $ that results from firing $ t.\dynsubcl^{[\instancesubcl,\instanceelement]} $
from $ \decsetfct_s $ (w.r.t the flow function $ \flowfunc $).
Finally, the canonical representation $ \rep' $ of $ \rep_s' $ is found.
The symbolic firing is denoted by $ \rep[t.\dynsubcl^{[\instancesubcl,\instanceelement]} $ and
can be described by
\begin{equation*}
\rep\xrightarrow{\text{splitting w.r.t } [\instancesubcl,\instanceelement]} 
\rep_s \xrightarrow{\text{firing } t.\dynsubcl^{[\instancesubcl,\instanceelement]}}
\rep_s'\xrightarrow{\text{canon. rep.}}\rep'.
\end{equation*}
The symbolic instances of a transition $ t $
form a partition of the regular instances $ t.c $, $ c\in\valuations(t) $.

The idea of \emph{symbolic $ \top $-resolution}
is similar to the
symbolic firing.
A canonical representation which 
contains a tuple $ (p.\dynsubcl,\top) $
is partitioned into a finer 
representation.
The symbol $ \top $
gets resolved as before, 
and finally, the new canonical representation is built.
Partitioning
is a more general case of splitting.
While in the $ \rep[\instancesubcl,\instanceelement] $ only
dynamic subclasses of cardinality $ 1 $
are split off, the \emph{partitions}
$ \rep_p $
of a representation $ \rep $ 
are exactly the representations from 
which $ \rep $ can be reached by merging dynamic subclasses.

This gives, for a canonical representation $ \rep $,
that $ \rep[\top\rangle $ iff
$ \exists (p.\dynsubcl,\top)\in\decsetfct $,
and $ \rep[\top\rangle\rep' $ 
iff there is a partition $ \rep_p $
of $ \rep $
such that $ \rep' $ is 
the canonical representation of $ \rep_p' $.
In $ \rep_p' $ the dynamic subclasses are copied from 
$ \rep_p $, and in $ \decsetfct_p' $
every $ (p.\dynsubcl,\top)\in\decsetfct_p $ chooses
a new commitment set $ \comset $.
The symbolic $ \top $-resolution is denoted by $ \rep[\top\rangle\rep' $ and can be described by
\begin{equation*}
\rep\xrightarrow{\text{\text{partitioning}}} 
\rep_p \xrightarrow{\top\text{-resolution}}
\rep_p'\xrightarrow{\text{canon. rep.}}\rep'.
\end{equation*} 

\hypertarget{target:zweib}{\textbf{(2b)}} Now we describe the algorithm:
Initially, the strategy $ \PGS $
contains places corresponding to the 
initial marking $ \marking_0 $ 
in the Petri game, i.e.,
places labeled with $ p.c $
for every $ p.c\in\marking_0 $, 
each with a token on them.
They constitute the initial marking of $ \PGStrat $.
Every node $ \treenode $ in $ \strategyTree $,
labeled with $ \rep=(\dynsubclasses,\decsetfct) $,
is now associated to a set $ \cuts_{\treenode} $ of \emph{cuts} --
reachable markings in the strategy/unfolding.
The set $ \cuts_{\treenode_0} $,
associated to the root $ \treenode_0 $, contains only
the cut $ \cut_0 $ consisting of 
the places in the 
initial marking described above.
Every cut $ \cut $ is equipped with
a valid assignment~$ \validassignment_\cut $
that maps $ \cut $ to 
places in the dynamic decision set
$ \decsetfct $.
Every edge~$ (\treenode,\treenode') $ in $ \strategyTree $ 
corresponds to an edge~$ (\rep,\rep') $
in $ \STPGc $, and the edges there again can correspond to 
either relations $ \rep[t.\dynsubcl^{[\instancesubcl,
	\instanceelement]}\rangle \rep' $
or $ \rep[\top\rangle\rep' $.

Suppose in the breadth-first traversal of $ \strategyTree $
we reach a node~$ \treenode $ with label $ \rep $
and set of associated cuts $ \cuts_\treenode $. 
Further suppose there is an edge in $ \strategyTree $
from $ \treenode $ to a node $ \treenode' $ labeled with $ \rep' $.
From there, we distinguish three cases:

1) The edge $ (\treenode,\treenode') $ in $ \strategyTree $
corresponds to a relation $ \rep[\top\rangle\rep' $ in $ \SG $. 
As we know, this can be depicted as 
$ \rep\xrightarrow{\text{\text{partition }} \partitionfunc} 
\rep_p \xrightarrow{\top\text{-resolution}}
\rep_p'\xrightarrow{\text{canon. rep. } }\rep'$.
We now describe how $ \cuts_\treenode $ changes in these three steps.
In the first step, the partition $ \rep_p $, there is a
family $ \partitionfunc $ of functions 
$ \partitionfunc_i:({\dynsubclasses_p})_i
\to \dynsubclasses_i $
describing the partition.
For every $ (\cut,\validassignment_{\cut})\in \cuts_{\treenode} $,
we change the assignment to a function
$ \validassignment_{\cut}':\bigcup_i\basecl_i\to \dynsubclasses_p $
such that 
$ \validassignment_{\cut}=\partitionfunc
\circ\validassignment_{\cut}' $.
In the second step, the resolution of $ \top $,
the set $ \cuts_\treenode $ does not change 
(and
we now denote $ \validassignment_{\cut}':
\bigcup_i\basecl_i\to\dynsubclasses_p' $) .
For the third step (the canonical representation)
there again is a partition function
$ \partitionfunc': \dynsubclasses_p'\to
\dynsubclasses' $. 
We again change the assignment
for every cut $ \cut $ to
$ \validassignment_\cut''=
\partitionfunc'\circ\validassignment_\cut' $.
Finally we assign to node $ \treenode' $ the set
$ \cuts_{\treenode'}=\{ (\cut,\validassignment_\cut'')\with
(\cut,\validassignment_{\cut})\in\cuts_\treenode \} $.

2) The edge $ (\treenode,\treenode') $ in $ \strategyTree $
corresponds to a relation $ \rep[
t.\dynsubcl^{[\instancesubcl,\instanceelement]} \rangle\rep' $ 
in $ \SG $ and $ t $ is an system transition. 
The relation can be described by 
$ \rep\xrightarrow{\text{splitting w.r.t } [\instancesubcl,\instanceelement]} 
\rep_s \xrightarrow{\text{firing } t.\dynsubcl^{[\instancesubcl,\instanceelement]}}
\rep_s'\xrightarrow{\text{canon. rep. } }\rep'$.
Since the first step (splitting) is a special case of partition,
we proceed as in the first case, and get the same cuts $ \cut $
with altered assignments
$ \validassignment_\cut': 
\bigcup_i\basecl_i\to \dynsubclasses_s $.
Since after the process of splitting, 
the dynamic subclasses appearing in 
$ \dynsubcl^{[\instancesubcl,\instanceelement]} $
are all of cardinality $ 1 $, there is for each cut $ \cut $
only one mode $ v_\cut=((c_{i,j})_{j=1}^{t_i})_{i=1}^n\in\valuations(t) $
such that $ \validassignment_\cut(v_\cut)=
\dynsubcl^{[\instancesubcl,\instanceelement]} $.
For every cut $ \cut $ we add a transition 
with label $ t.d $ to the strategy, 
the places labeled with $ p.c $ from $ \cut $
in its preset such that $ c\in v(p,t) $,
and we add places labeled with $ p'.c' $ to the 
strategy in the postset of the transition if 
$ c'\in v(t,p') $. 
We then delete the places in the preset 
of the new transition from $ \cut $ and add the places 
in the postset, to get the cut~$ \cut' $ with 
the same assignment $ \validassignment_{\cut'}
=\validassignment_\cut':
\bigcup_i\basecl_i\to \dynsubclasses_s' $.
In the third step (canonical representation) we can
again proceed as in the first case, which results in 
$ \cuts_{\treenode'}=\{ (\cut',\validassignment_\cut'')\with
(\cut,\validassignment_{\cut})\in\cuts_\treenode \} $.

3) The edge $ (\treenode,\treenode') $ in $ \strategyTree $
corresponds to a relation $ \rep[
t.\dynsubcl^{[\instancesubcl,\instanceelement]} \rangle\rep' $ 
in $ \SG $ and $ t $ is an environment transition. 
In this case, we proceed exactly as in the second case,
but with one crucial change in the first step 
(splitting w.r.t $ [\instancesubcl,\instanceelement] $):
instead of choosing \emph{one} function
$ \validassignment_{\cut}':\bigcup_i\basecl_i\to \bigcup_i\rep_p.\symbcl_i $
such that 
$ \validassignment_{\cut}=\partitionfunc
\circ\validassignment_{\cut}' $,
we add a pair $ (\cut,\validassignment_\cut') $ for 
\emph{all} such assignments and all corresponding 
$ t.\dynsubcl^{[\instancesubcl,\instanceelement]} $.
The rest of the procedure remains the same.

\hypertarget{target:drei}{\textbf{(3)}} Finally, we prove that
the algorithm 
is correct.
This proof works analogous to 
\cite{DBLP:journals/iandc/FinkbeinerO17},
where Finkbeiner and Olderog consider the case of P/T Petri games.
The crucial difference is that we associate multiple 
cuts in $ \PGStrat $ to a node in the tree~$ \strategyTree $.
We have to prove that the generated strategy $ \PGStrat $
is satisfying the conditions justified refusal,
determinism and deadlock freedom defined above.
Then, $ \PGStrat $ is \emph{winning},
since $ \STPGStrat $ does not reach
representations that contain a bad place.

\emph{Justified refusal:}
Assume there is a transition $ \transition $ in the unfolding
that is not in the strategy, but the preset
of the transition is (so $ \transition $ would be enabled in the strategy).
Let $ t.c=\lambda(\transition) $.
Then, the reason for $ \transition $'s absence is that 
there is a system place $ \place $ in the preset of $ \transition $
such that on all traversals that reach a representation $ \rep=(\dynsubclasses,\decsetfct) $
and a cut with valid assignment $ (\cut,\validassignment_\cut) $
such that $ \transition $ is enabled in $ \cut $,
$ \decsetfct $ forbids $ \validassignment_\cut(\lambda(\transition))=t.\validassignment_\cut(c) $,
and hence also no transition $ \mathsf{\transition'} $ 
with $ \lambda(\mathsf{t'})=\lambda(\mathsf{t}) $
(implying $ \validassignment_\cut(\lambda(\transition))=\validassignment_\cut(\lambda(\mathsf{\transition'})) $)
is in the strategy.

\emph{Deadlock freedom:}
The generated strategy is deadlock free because
in every branch in the tree $ \strategyTree $ 
either infinitely many transitions are fired,
or at some point it loops
in a terminating state.
The strategy $ \STPGStrat $ does not contain representations that are deadlocks.

\emph{Determinism:}
Aiming a contradiction, assume that there
are two transitions $ \mathsf{t_1} $ and $ \mathsf{t_2} $
in the strategy that have a shared system place $ \place $ in their preset.
and are both enabled at the same cut $ \cut $ in the strategy.\\
\emph{Case 1.}
Suppose that during the construction of $ \PGStrat $,
$ \cut $ is encountered as a cut $ (\cut,\validassignment_\cut)\in\cuts_{\treenode} $
associated to node $ \treenode $ labeled with representation $ \rep=(\dynsubclasses,\decsetfct) $.
Then $ \rep $ is nondeterministic. Contradiction (the strategy does not contain nondeterministic representations).\\
\emph{Case 2.}
Suppose there exist two cuts $ (\cut_1,\validassignment_{\cut_1})\in\cuts_{\treenode_1} $
and $ (\cut_2,\validassignment_{\cut_2})\in\cuts_{\treenode_2} $
such that during the construction of $ \PGStrat $,
first $ \mathsf{t_1} $ was introduced at $ \cut_1 $
and then $ \mathsf{t_2} $ was introduced at $ \cut_2 $,
with $ \preset{\mathsf{t_i}}\subseteq\cut_i $ but $ \preset{\mathsf{t_1}}\nsubseteq\cut_2 $
and $ \preset{\mathsf{t_2}}\nsubseteq\cut_1 $.
We again distinguish two cases.\\
\emph{Case 2a} concerns the situation that 
$ \treenode_2 $ is \emph{below} $ \treenode_1 $ in $ \strategyTree $.
Suppose in the B\"uchi game the symbolic instances 
$ t_1.\dynsubcl^{[\instancesubcl^1,\instanceelement^1]},\dots,t_n.\dynsubcl^{[\instancesubcl^n,\instanceelement^n]} $,
with $ n\geq 1 $, are fired between $ \treenode_1 $ and $ \treenode_2 $,
corresponding to a sequence 
$ \cut_1=\cut_0'[\mathsf{\transition_1'}\rangle\dots[\mathsf{\transition_n'}\rangle\cut_n'=\cut_2 $.
If this sequence removes a token from a place $ \place\in\preset{\mathsf{\transition_1}} $
and puts it on $ \preset{\mathsf{\transition_2}} $ then place $ \preset{\mathsf{\transition_1}} $ and $ \preset{\mathsf{\transition_2}} $ are in conflict, therefore there cannot exist a cut~$ \cut $ in which both $ \mathsf{\transition_1} $ and $ \mathsf{\transition_2} $ are enabled. Contradiction.
Thus, a transition that takes a token from a place in $ \preset{\mathsf{\transition_1}} $
puts it outside of $ \preset{\mathsf{\transition_2}} $.
Consider the first transition $ \mathsf{\transition_j} $ that removes a token from $ \preset{\mathsf{\transition_1}} $.
Then the representation $ \rep=(\dynsubclasses,\decsetfct) $, 
label of the node $ \treenode $ that introduces $ \cut_{j-1} $
is nondeterministic, as two instances 
$ t.\dynsubcl^{[\instancesubcl,\instanceelement]} $ and $ t_j.\dynsubcl^{[\instancesubcl^i,\instanceelement^i]} $,
corresponding to 
$ \mathsf{\transition_1} $ and $ \mathsf{\transition_j} $
are enabled in $ \decsetfct $. 
Then proceed as in Case 1.\\
\emph{Case 2b} concerns the situation that $ \treenode_1 $ and $ \treenode_2 $
are on different branches in $ \strategyTree $. 
Let $ \treenode $ be their last common ancestor.
Let $ \place $ be the environment place contained in the 
cut $ (\cut,\validassignment_\cut)\in\cuts_\treenode $
that such that $ \cut\leq\cut_i $.
Let $ \mathsf{\transition_i'} $ be the transition 
from $ \place $ in the path from $ \treenode $ to $ \treenode_i $.
Then all transitions added in branches after $ \mathsf{t_1'} $
are in conflict with all transitions added in branches after $ \mathsf{t_2'} $.
In particular, $ \mathsf{t_1} $ and $ \mathsf{t_2} $ are in conflict,
and therefore cannot be enabled at the same marking. Contradiction.

\section{Experimental Data}

\label{app:table}
In this section of the appendix we present additional data of our experimental results.
The results are obtained on an AMD Ryzen\texttrademark{} 7 3700X CPU with 4.4 GHz and 64 GB RAM
and a timeout (TO) of 30 minutes. The running times are given in seconds.
\begin{longtable}{l|c||rrr||c||rrr||r|r}
   &    & \multicolumn{3}{c||}{Explicit Approach} &  & \multicolumn{3}{c||}{Reduced State Space} & Memb. & Canon. \\
Ben. & Par. & \(|\mathsf{V}|\) & \(|\mathsf{E}|\) & Time & \(\models\) & \(|V|\) & \(|E|\) & \(|\symmetries|\) & Time & Time \\\hline
PD & 1/1 & 30 & 31 & .39 & \xmark & 30 & 31 & 1 & .38 & .40 \\
 & 1/2 & 287 & 390 & .64 & \xmark & 152 & 207 & 2 & .54 & .54 \\
 & 1/3 & 2510 & 4157 & 2.37 & \xmark & 516 & 885 & 6 & 1.06 & 1.21 \\
 & 1/4 & 20765 & 39596 & 38.46 & \xmark & 1376 & 2859 & 24 & 3.55 & 4.34 \\
 & 1/5 & \lightText{307604.0} & - & TO & \xmark & 3123 & 7630 & 120 & 6.06 & 6.94 \\
 & 1/6 & \lightText{3286707.0} & - & - & \xmark & 6315 & 17738 & 720 & 29.71 & 29.46 \\
 & 1/7 & \lightText{3.59159e7} & - & - & \xmark & 11707 & 37158 & 5040 & 373.48 & 405.21 \\
 & 1/8 & \lightText{3.97631e8} & - & - & - & - & - & 40320 & TO & TO \\\cdashline{2-11}
 & 2/1 & 277 & 316 & .64 & \xmark & 140 & 161 & 2 & .49 & .51 \\
 & 2/2 & 25940 & 39480 & 52.37 & \cmark & 6540 & 9912 & 4 & 4.46 & 4.83 \\
 & 2/3 & \lightText{2891324.0} & - & TO & \xmark & 200851 & 325481 & 12 & 34.29 & 31.13 \\
 & 2/4 & \lightText{3.04624e8} & - & - & \xmark & 3773484 & 5986550 & 48 & 2665.57 & 1716.67 \\
 & 2/5 & \lightText{-} & - & - & - & - & - & 240 & TO & TO \\\cdashline{2-11}
 & 3/1 & 1821 & 2360 & 1.89 & \xmark & 344 & 460 & 6 & .83 & .75 \\
 & 3/2 & \lightText{1110773.0} & - & TO & \cmark & 87014 & 194168 & 12 & 21.50 & 20.77 \\
 & 3/3 & \lightText{7.79483e8} & - & - & - & - & - & 36 & TO & TO \\\cdashline{2-11}
 & 4/1 & 11089 & 15440 & 7.39 & \xmark & 692 & 1013 & 24 & 2.21 & 1.92 \\
 & 4/2 & \lightText{3.54425e7 } & - & TO & \cmark & 728983 & 2090860 & 48 & 393.00 & 267.28 \\
 & 4/3 & \lightText{-} & - & - & - & - & - & 144 & TO & TO \\\cdashline{2-11}
 & 5/1 & 64433 & 93152 & 466.16 & \xmark & 1230 & 1913 & 120 & 4.43 & 4.99 \\
 & 5/2 & \lightText{-} & - & TO & - & - & - & 240 & TO & TO \\
\hline
AS & 2 & 7383 & 14482 & 6.85 & \cmark & 3713 & 7316 & 2 & 2.85 & 3.49 \\
 & 3 & \lightText{5.80273e7} & - & TO & - & - & - & 6 & TO & TO \\
\hline
CM & 2/1 & 155 & 178 & .54 & \cmark & 79 & 92 & 2 & .49 & .52 \\
 & 2/2 & 2883 & 4128 & 2.33 & \xmark & 760 & 1067 & 4 & 1.07 & 1.08 \\
 & 2/3 & 58501 & 101108 & 661.98 & \xmark & 5548 & 9032 & 12 & 4.38 & 5.94 \\
 & 2/4 & \lightText{1437319.0} & - & TO & \xmark & 33250 & 60771 & 48 & 15.12 & 14.40 \\
 & 2/5 & \lightText{3.41983e7} & - & - & \xmark & 170100 & 340898 & 240 & 296.05 & 185.81 \\
 & 2/6 & \lightText{8.37675e8} & - & - & - & - & - & 1440 & TO & TO \\\cdashline{2-11}
 & 3/1 & 702 & 1097 & 1.11 & \cmark & 147 & 219 & 6 & .71 & .58 \\
 & 3/2 & 45071 & 137701 & 552.30 & \cmark & 4048 & 10562 & 12 & 4.46 & 4.99 \\
 & 3/3 & \lightText{3431991.0} & - & TO & \xmark & 91817 & 420717 & 36 & 89.35 & 49.90 \\
 & 3/4 & \lightText{2.62284e8} & - & - & - & - & - & 144 & TO & TO \\\cdashline{2-11}
 & 4/1 & 2917 & 7396 & 3.39 & \cmark & 239 & 464 & 24 & 1.24 & 1.42 \\
 & 4/2 & \lightText{658721.0} & - & TO & \cmark & 16012 & 116542 & 48 & 25.42 & 14.09 \\
 & 4/3 & \lightText{1.54628e8} & - & - & - & - & - & 144 & TO & TO \\
\hline
DW & 1 & 57 & 69 & .41 & \cmark & 57 & 69 & 1 & .40 & .39 \\
 & 2 & 457 & 600 & .80 & \cmark & 241 & 314 & 2 & .67 & .62 \\
 & 3 & 3385 & 4640 & 2.95 & \cmark & 1145 & 1568 & 3 & 1.40 & 1.48 \\
 & 4 & 22305 & 31296 & 33.19 & \cmark & 5613 & 7878 & 4 & 3.73 & 5.08 \\
 & 5 & \lightText{134186.0} & - & TO & \cmark & 26857 & 38192 & 5 & 6.93 & 7.41 \\
 & 6 & \lightText{755762.0} & - & - & \cmark & 126065 & 180772 & 6 & 19.53 & 18.93 \\
 & 7 & \lightText{4055098.0} & - & - & \cmark & 579321 & 835056 & 7 & 100.67 & 75.24 \\
 & 8 & \lightText{2.09715e7} & - & - & \cmark & 2621677 & 3793406 & 8 & 986.77 & 671.04 \\
 & 9 & \lightText{1.05381e8} & - & - & - & - & - & 9 & TO & TO \\
\hline
DWs & 1 & 61 & 64 & .39 & \cmark & 61 & 64 & 1 & .37 & .38 \\
 & 2 & 1835 & 2058 & 1.85 & \cmark & 926 & 1053 & 2 & 1.16 & .93 \\
 & 3 & 42851 & 48658 & 191.02 & \cmark & 14295 & 16262 & 3 & 6.19 & 5.74 \\
 & 4 & \lightText{908413.0} & - & TO & \cmark & 224144 & 255839 & 4 & 34.64 & 34.94 \\
 & 5 & \lightText{1.79271e7} & - & - & \cmark & 3536069 & 4039956 & 5 & 1935.66 & 4230.29 \\
 & 6 & \lightText{3.40931e8} & - & - & - & - & - & 6 & TO & TO \\
\hline
CS & 1 & 21 & 20 & .38 & \cmark & 21 & 20 & 1 & .38 & .36 \\
 & 2 & 639 & 812 & .80 & \cmark & 326 & 425 & 2 & .63 & .64 \\
 & 3 & 45042 & 71273 & 358.69 & \cmark & 7738 & 12362 & 6 & 5.20 & 6.05 \\
 & 4 & \lightText{7225357.0} & - & TO & \cmark & 310076 & 544733 & 24 & 151.62 & 148.08 \\
 & 5 & \lightText{3.15402e9} & - & - & - & - & - & 120 & TO & TO 
\end{longtable}

\end{document}